\documentclass[onecolumn,amsmath,amssymb,floatfix]{revtex4-2}

\usepackage{graphicx}
\usepackage{dcolumn}
\usepackage{bm}
\usepackage{hyperref}
\usepackage{xcolor}
\usepackage{tikz}
\usepackage{pgfplots}
\pgfplotsset{compat=1.18}
\usepackage{booktabs}
\usepackage{multirow}
\usepackage{amsmath,amssymb,amsthm}
\usepackage{algorithm}
\usepackage{algpseudocode}
\usepackage{bbm}

\newcommand{\R}{\mathbb{R}}
\newcommand{\Z}{\mathbb{Z}}

\newcommand{\norm}[1]{\left\|#1\right\|}

\newtheorem{proposition}{Proposition}

\newcommand{\stimcirc}{%
  \texttt{surface\_\allowbreak code\allowbreak:\allowbreak
          rotated\_\allowbreak memory\_\allowbreak z}}

\begin{document}

\title{Physics-Informed Graph-Neural Decoding of the Surface Code:\\
the Logical Signal as an Exact Topological Pairing}
\author{P. E. Trevisanutto}
\thanks{Corresponding author}
\email{paolo.trevisanutto@stfc.ac.uk}
\affiliation{Scientific Computing Department, Science \& Technology Facilities Council, Daresbury Laboratory, Warrington, WA4 4AD, United Kingdom.}
\author{S. Dhanpal}
\affiliation{Scientific Computing Department, Science \& Technology Facilities Council, Rutherford Appleton Laboratory, Didcot OX11 0QX, United Kingdom.}
\author{S. Basak}
\affiliation{Scientific Computing Department, Science \& Technology Facilities Council, Rutherford Appleton Laboratory, Didcot OX11 0QX, United Kingdom.}
\author{L. Petit}
\affiliation{Scientific Computing Department, Science \& Technology Facilities Council, Daresbury Laboratory, Warrington, WA4 4AD, United Kingdom.}
\author{J.Thiyagalingam}
\affiliation{Scientific Computing Department, Science \& Technology Facilities Council, Rutherford Appleton Laboratory, Didcot OX11 0QX, United Kingdom.}
\date{\today}

\begin{abstract}
We develop a physics-informed graph neural network (GNN) decoder for the surface
code that solves a discrete Poisson equation on the syndrome graph, with the
syndrome as the charge source. We compare four readout architectures for
extracting the logical-error probability: a potential-based readout that maps the
Poisson field through a multilayer perceptron, two current-based readouts under
single- and two-sink Dirichlet boundary conditions, and a diffusion-based
variant. Comparing these, we show that the solver's edge current is a pure
gradient flow whose harmonic (circulating) part vanishes identically. The
logical signal therefore cannot be read as a component of the current itself; it
is instead a topological pairing between the syndrome and a boundary-fixed
harmonic coordinate that distinguishes the two code boundaries linked by the
logical operator. We prove
that this pairing is evaluated exactly and in closed form, with no learned
readout parameters, as the net current drained between the two boundary sinks. On
the rotated surface code under circuit-level depolarising noise, this single
closed-form scalar matches the best full-field readout and, at larger code
distance, significantly exceeds the single-sink current pool, so that isolating
the pairing helps more, not less, as the field grows larger and sparser. The
decoder is not intended to surpass minimum-weight perfect matching, near-optimal
for this noise model; its contribution is an interpretable characterisation of
the logical signal itself.
\end{abstract}

\keywords{quantum error correction, surface code, graph neural network,
  discrete exterior calculus, MWPM}

\maketitle

\section{Introduction}
\label{sec:intro}

Fault-tolerant quantum computing requires active quantum error correction
(QEC) \cite{Terhal2015}.  Among the many proposed QEC codes, the surface code~\cite{Kitaev2003}
occupies a privileged position. It has a high threshold error rate
($p_{\rm th}\approx 10.3\%$ under independent depolarizing
noise~\cite{Dennis2002}), requires only nearest-neighbour interactions, and
its syndrome extraction is planar. For these types of codes, the canonical decoder is
Minimum-Weight Perfect Matching (MWPM)~\cite{Edmonds1965,Dennis2002}: syndrome excitations are matched on a weighted graph 
whose weights encode the log-likelihood of each error path.  Under independent depolarizing noise with
known error rate $p$, MWPM is close to optimal because the decoding problem
maps exactly onto the random-bond Ising model on the Nishimori
line~\cite{Dennis2002,Nishimori1981}, where MWPM implements maximum-likelihood
inference.

Despite this near-optimality, MWPM has two important limitations.  First, it
requires explicit knowledge of the noise model (the physical error rate $p$ or
the full error channel) to set edge weights.  In practice, the noise model
drifts over time and may be difficult to characterise precisely.  Second,
at circuit level, where measurement errors, gate errors, and correlated
noise arise, MWPM with standard weights becomes sub-optimal, and the
construction of accurate weight tables is non-trivial.

Learned decoders address both limitations by inferring the decoding function
directly from data, without assuming a known noise
model~\cite{Torlai2017,Varsamopoulos2017,Krastanov2017,Baireuther2018,
Maskara2019,Meinerz2022,Bausch2023}.  Most existing neural decoders treat the
syndrome as an image or a flat vector, losing the geometric structure of the
syndrome graph.  Graph neural networks (GNNs) restore this structure and
have been applied to QEC with promising
results~\cite{Overwater2022,Lange_2023,Gicev2023}.

In this work, inspired by the recent Joo's article \cite{Joo2026}, we propose a physics-informed GNN whose architecture embeds a discrete Poisson equation as a hard inductive bias, and we identify which part of that current carries the logical-error signal through the Hodge decomposition of the resulting current. The use of graph Laplacians in machine learning has a long
history~\cite{Belkin2001,Zhu2003,Zhou2004}; the novel element here is
their use as a \emph{differentiable physics layer} inside a GNN decoder
for a quantum error-correcting code, with the incidence matrix $\mathbf{B}$
playing the role of the discrete exterior derivative.
Our key contributions are:
\begin{enumerate}
  \item \textbf{An exact, closed-form readout (Prop.~\ref{prop:pairing}).}
    We prove that the logical-error signal is the pairing of the syndrome with
    the harmonic representative of the generator of the relative cohomology
    $H^1(G,\partial G;\Z_2)$, and that it is evaluated exactly---with no learned
    readout parameters---as the net current between two boundary sinks. The
    logical content is a topological invariant, read in closed form rather than
    approximated by a trained pooling stage. A single boundary node cannot do
    this: it merges the two code boundaries, no separating harmonic function
    exists, and the pairing is undefined.
  \item \textbf{The logical signal is boundary data, not a current component.}
    We show the solver's current is a pure gradient flow whose harmonic part
    vanishes identically, so no part of the field carries the logical class.
    This places the readout on the discrete de~Rham complex: the incidence
    matrix $\mathbf{B}$ is the discrete exterior derivative $d_0$, and the
    weighted Laplacian
    $\mathbf{L}=\mathbf{B}^\top\mathrm{diag}(\bm{w})\mathbf{B}$ is the discrete
    Hodge Laplacian $\Delta_0$, so the discrete Gauss law
    $\mathbf{L}\bm{\phi}=\bm{\rho}$ is an \emph{exact} statement, not an
    approximation, and the GNN learns only the metric $\bm{w}$.
  \item \textbf{A controlled empirical validation.} In a like-for-like ablation
    (single-sink vs.\ two-sink, identical encoder, solver, and training,
    differing only in boundary structure and readout) with multi-seed
    statistics, the closed-form scalar matches the best full-field readout at
    $d=5$ and significantly exceeds the single-sink pool at $d=7$ (paired
    $t=4.1$, $p=0.014$). The parameter-free topological readout thus saturates
    the logical information the full-field readouts access, exactly as a
    one-dimensional logical signal ($k=1$) predicts.
\end{enumerate}

\section{Background}
\label{sec:background}

\subsection{Surface Code and Syndrome Graph}
\label{sec:surface_code}

The rotated surface code of distance $d$ encodes one logical qubit in $d^2$
data qubits~\cite{Fowler2012}.  Syndrome measurements project data-qubit
errors onto a set of $(d^2-1)$ binary syndrome bits: $(d^2-1)/2$ detectors
measure $Z$-type (bit-flip) errors, and $(d^2-1)/2$ measure $X$-type
(phase-flip) errors.  A Pauli error $E$ is detected as the set of stabilisers
that anti-commute with $E$; the syndrome is the binary vector
$\bm{\sigma}\in\{0,1\}^{d^2-1}$.

The \emph{syndrome graph} $G=(V,E)$ (Fig.~\ref{fig:surface_code_sink}) has one
node per syndrome bit, plus a virtual boundary node $v_b$ that absorbs error
chains reaching the code boundary.  An edge $(u,v)\in E$ represents a single-qubit error path whose
endpoints are the two syndrome bits it excites; its weight is the
log-likelihood $c_e = -\log[p_e/(1-p_e)]$.  Decoding reduces to finding the
minimum-weight perfect matching on the subgraph induced by the set of
\emph{excited} syndrome bits (those with $\sigma=1$).

This work uses the rotated surface code under circuit-level noise
(Fig.~\ref{fig:surface_code}).  All experiments use the
\stimcirc{} circuit generated by
\textsc{Stim}~\cite{Gidney2021}, which defines a single logical observable
($\bar Z$); the syndrome graph is built from the corresponding circuit-level
detector error model (DEM).  Because each detector is a space-time event
(a stabiliser measurement at one syndrome round), the temporal dimension is
compiled directly into the graph geometry: a distance-$d$ code run for $r$
rounds yields $N=(d^2-1)\,r/2 + 1$ nodes once the temporal detector copies and
the virtual boundary are included.  Our main experiments use the distance-$5$
code with $r=5$ rounds, giving $N=121-2$ nodes (120 detectors plus one or two boundaries) and distance-$7$
code with $r=7$ rounds, giving $N=337-8$ nodes (336 detectors plus one or two boundaries);
we focus the quantitative comparison on $p=0.005$, the regime in which the distance-$5$ logical signal is well resolved, and, unless stated
otherwise, report statistics over five independent seeds, each with its own
train/validation/test split.  Node coordinates $(t,x,y)$ are retained so that
the boundary partition used by the two-sink readout (Sec.~\ref{sec:twosink})
can be defined geometrically.

\begin{figure}[t]
\centering
\includegraphics[width=\textwidth]{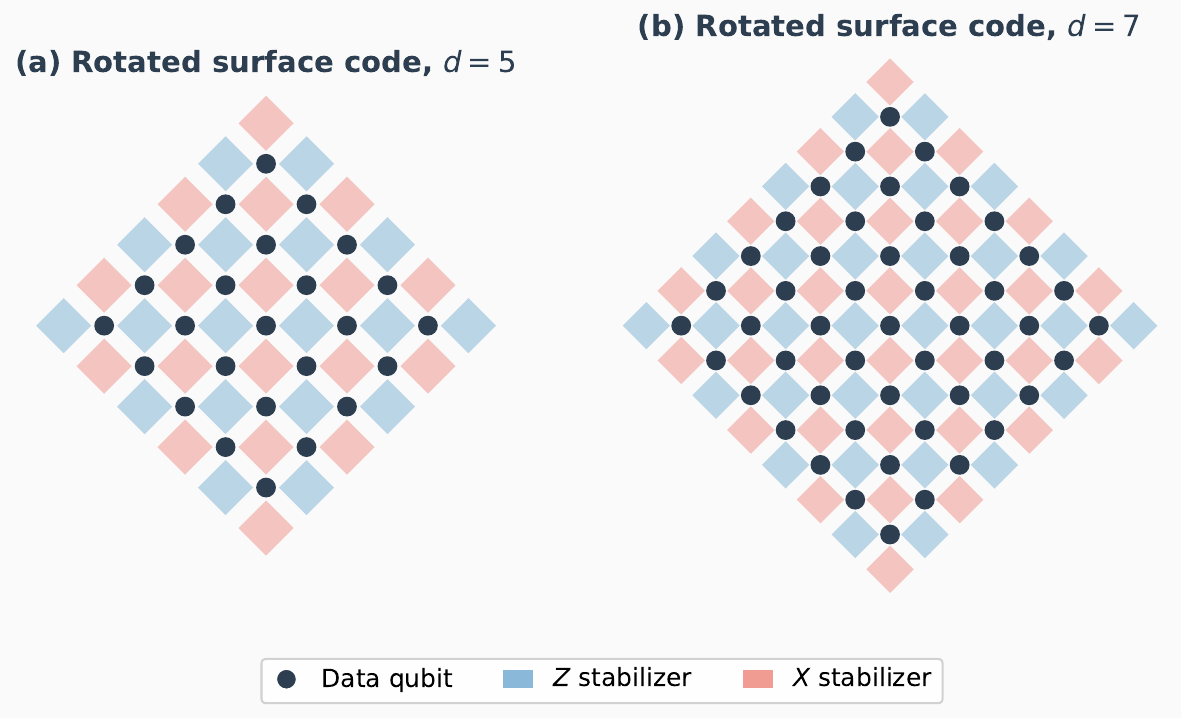}
\caption{%
  Rotated surface code and syndrome graph.
  (a)~Rotated surface code, $d=5$: data qubits (dark circles) on a
  $5\times 5$ grid; $Z$-type stabilisers (blue diamonds) and $X$-type
  stabilisers (red diamonds) alternate in a checkerboard pattern; this is
  the code used throughout.
  (b)~A larger rotated code ($d=7$) shown only to illustrate how the layout
  scales with distance.
}
\label{fig:surface_code}
\end{figure}

\begin{figure}[t]
\centering
\includegraphics[width=\textwidth]{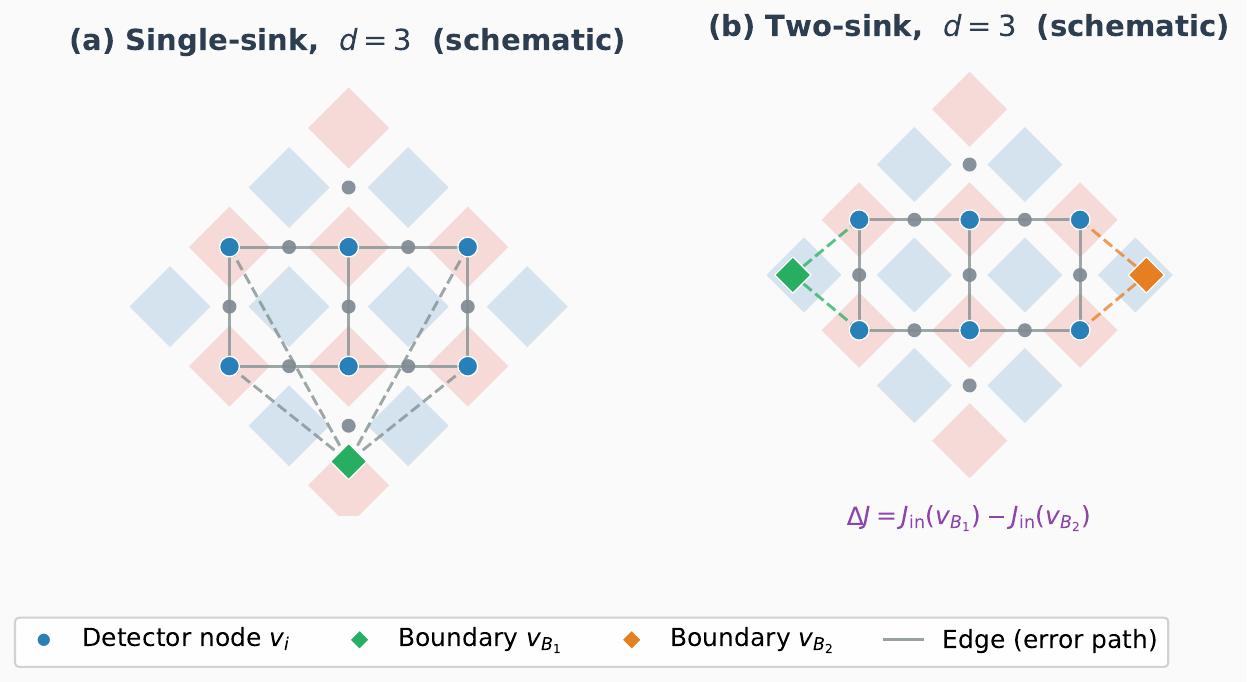}
\caption{%
  Syndrome graph $G$ (schematic): detector nodes (blue circles) sit at
  stabiliser plaquettes, and an edge connects two nodes if a single data-qubit
  error excites both.  Boundary sinks (diamonds) absorb error chains that reach
  the code boundary via dashed edges.
  (a)~The single-sink architecture (our baseline) uses one boundary node
  $v_b$, into which all chains drain.
  (b)~The two-sink architecture (used throughout) places one sink on each of the
  two code boundaries linked by the logical operator, $v_{b}^{(1)}$ and
  $v_{b}^{(2)}$; the readout is the net current between them,
  $\Delta J=J_{\rm in}(v_{b}^{(1)})-J_{\rm in}(v_{b}^{(2)})$
  (Sec.~\ref{sec:twosink}), read at the boundary rather than across any cut
  through the bulk (Fig.~\ref{fig:physics_current}).
}
\label{fig:surface_code_sink}
\end{figure}

\subsection{Minimum-Weight Perfect Matching}
\label{sec:mwpm}

MWPM~\cite{Edmonds1965} finds the perfect matching $M^*$ that minimises the
total edge weight among all matchings of the excited syndrome nodes.  Under
independent depolarizing noise with known $p$, this is equivalent to
maximum-likelihood decoding~\cite{Dennis2002}.  After matching, the parity of
the number of matched paths crossing the logical operator determines the
logical error prediction.

The computational complexity of exact MWPM is $O(n^3)$ in the number of
excited nodes $n$; in practice, the near-linear
\textsc{PyMatching}~\cite{Higgott2023} library is used.

\subsection{Discrete Exterior Calculus on Graphs}
\label{sec:dec}

Discrete Exterior Calculus (DEC)~\cite{Hirani2003,Desbrun2005} provides a
coordinate-free calculus on simplicial complexes.  On a graph
$G=(V,E)$ with an arbitrary edge orientation, the 0-forms are scalar
functions on nodes ($\R^N$) and the 1-forms are functions on oriented edges
($\R^{|E|}$).

The \emph{discrete exterior derivative} $d_0: \R^N \to \R^{|E|}$ is the
signed incidence matrix $\mathbf{B}\in\{-1,0,+1\}^{|E|\times N}$:
\begin{equation}
  (\mathbf{B})_{e,(u,v)} = \begin{cases}
    +1 & \text{if node }v\text{ is the head of edge }e=(u,v),\\
    -1 & \text{if node }u\text{ is the tail of edge }e,\\
    0  & \text{otherwise.}
  \end{cases}
\end{equation}
Given a weight 1-form $\bm{w}\in\R^{|E|}_{>0}$, the \emph{discrete
Hodge star} on 1-forms is $\star_1 = \mathrm{diag}(\bm{w})$.  The discrete
Hodge Laplacian on 0-forms is then
\begin{equation}
  \mathbf{L} = d_0^\top\, \star_1\, d_0
             = \mathbf{B}^\top \mathrm{diag}(\bm{w})\, \mathbf{B},
\label{eq:laplacian}
\end{equation}
which is the weighted graph Laplacian \cite{Joo2026}.  The discrete Gauss law
$\mathbf{L}\bm{\phi}=\bm{\rho}$ is the exact discrete analogue of
$-\nabla\cdot(w\,\nabla\phi)=\rho$ on a Riemannian manifold.  No
approximation is involved: the identity $d_0^\top \star_1 d_0$ holds exactly
on the graph.

The 1-form (flux) associated with the solution $\bm{\phi}$ is
\begin{equation}
  \bm{J} = \star_1\, d_0\,\bm{\phi}
          = \mathrm{diag}(\bm{w})\,\mathbf{B}\bm{\phi},
\label{eq:current}
\end{equation}
the weighted edge current, satisfying $\mathbf{B}^\top \bm{J} = \bm{\rho}$
(discrete Gauss law / Kirchhoff current law).
Fig.~\ref{fig:physics_current} illustrates this on a schematic
six-node syndrome graph with two boundary sinks: the excited detectors act as
sources, the current drains to the two sinks, and Kirchhoff's law is satisfied
exactly at every internal node.

\begin{figure}[t]
\centering
\includegraphics[width=\textwidth]{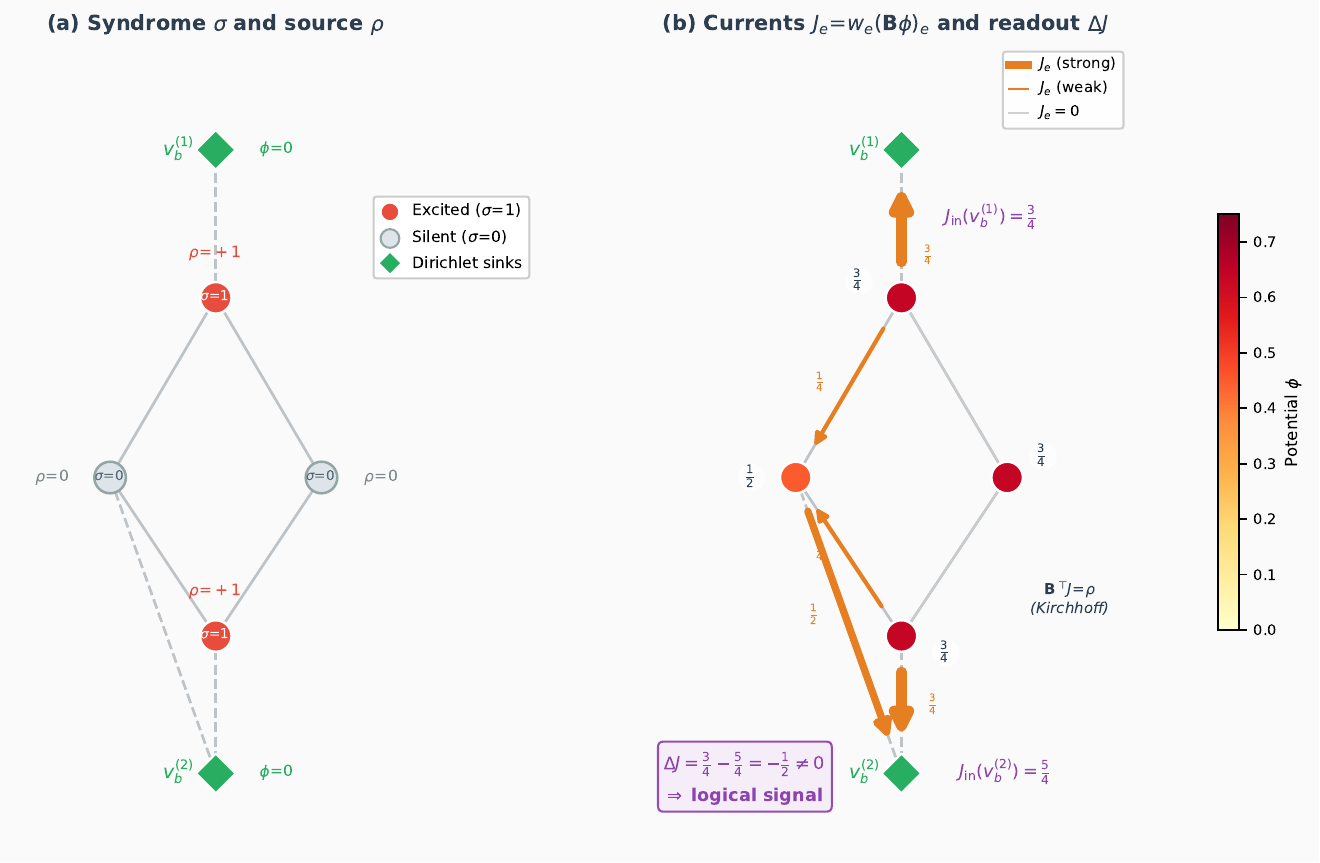}
\caption{%
  Physical interpretation of the two-sink readout on a schematic six-node
  syndrome graph (four detectors $v_0$--$v_3$ and two boundary sinks
  $v_b^{(1)}, v_b^{(2)}$), computed exactly with $w_e=1$.
  (a)~Two excited detectors ($\sigma=1$, red) act as sources $\rho=+1$; silent
  detectors have $\rho=0$.  Both sinks are Dirichlet nodes held at
  $\phi_{v_b^{(1)}}=\phi_{v_b^{(2)}}=0$, so no background charge is imposed on
  the detectors.
  (b)~Solving $\mathbf{L}(\bm{w})\bm{\phi}=\bm{\rho}$ yields the node potentials
  $\phi$ (colour scale); the edge currents $J_e=w_e(\mathbf{B}\bm{\phi})_e$ flow
  from high to low potential (arrow thickness $\propto|J_e|$), and Kirchhoff's
  law $\mathbf{B}^\top\bm{J}=\bm{\rho}$ holds at every internal node.  The
  readout is the difference of the currents drained by the two sinks,
  $\Delta J=J_{\rm in}(v_b^{(1)})-J_{\rm in}(v_b^{(2)})=\tfrac34-\tfrac54
  =-\tfrac12\neq0$, signalling a logical error.  This is \emph{not} a flux
  across any horizontal line: a cut would return the enclosed charge (a
  gradient quantity), whereas $\Delta J$ is the sink-current difference that
  Prop.~\ref{prop:pairing} identifies with the harmonic pairing
  $\sum_{\sigma_v=1}(1-2\chi_v)$ (Fig.~\ref{fig:harmonic_vote}).  A single
  boundary node would merge the two sinks, leaving no separating coordinate
  $\bm{\chi}$ and no pairing to read (Sec.~\ref{sec:hodge}).
}
\label{fig:physics_current}
\end{figure}

\subsection{Hodge Decomposition and the Logical Signal}
\label{sec:hodge}

The construction above returns a current $\bm{J}\in\R^{|E|}$, but not every
part of $\bm{J}$ carries logical information.  The Hodge (Helmholtz)
decomposition on the graph makes this precise.  On a $1$-dimensional complex
(nodes and edges only, no $2$-cells) every $1$-form splits orthogonally as
\begin{equation}
  \bm{J} \;=\; \underbrace{\mathbf{B}\,\bm{\psi}}_{\text{exact (gradient)}}
        \;\oplus\; \underbrace{\bm{J}_{\rm harm}}_{\text{harmonic}},
  \label{eq:hodge}
\end{equation}
where the exact part is a gradient of some potential $\bm{\psi}$ and the
harmonic part satisfies $\mathbf{B}^\top\bm{J}_{\rm harm}=\bm{0}$ while not
being a gradient.  The harmonic subspace is the kernel of the Hodge Laplacian
restricted to the non-exact sector; by the discrete Hodge theorem its dimension
equals the first Betti number $b_1=\dim H_1(G;\Z_2)$, the number of independent
cycles of the syndrome graph.  Most of these cycles are local and carry no
logical content; the logically relevant subspace is the homology relative to the
code boundary, $H_1(G,\partial G;\Z_2)$, whose dimension is the number $k$ of
logical qubits (Sec.~\ref{sec:onedim}).  For the planar rotated code
$k=1$: there is exactly one logical cycle, the one that links the two code
boundaries connected by $\bar Z$.

\textbf{The solver's current is exact, not harmonic.}

The current the solver produces is $\bm{J}=\mathrm{diag}(\bm{w})\mathbf{B}\bm{\phi}$
(Eq.~\eqref{eq:current}), i.e.\ a weighted gradient of the node
potential $\bm{\phi}$.  A gradient has no harmonic part: applying
$\mathbf{B}^\top$ gives $\mathbf{B}^\top\bm{J}=\mathbf{L}(\bm{w})\bm{\phi}=\bm{\rho}$,
so in the decomposition~\eqref{eq:hodge} (taken in the metric
$\star_1=\mathrm{diag}(\bm{w})$) $\bm{J}$ is purely the exact term and its
harmonic part vanishes,
\begin{equation}
  \bm{J}_{\rm harm} \;=\; \bm{J} - \mathrm{diag}(\bm{w})\mathbf{B}\bm{\phi} \;=\; \bm{0},
  \label{eq:jharm_zero}
\end{equation}
identically, for every syndrome.  This is not a defect but an elementary fact
about gradient flows: a resistor network driven by injected current, with no
flux threading it, carries no circulating current.  The solver's current is exact without any harmonic component $\bm{J}_{\rm harm}$ of
$\bm{J}$ to read.

The logical content must therefore be defined differently.  It is not a piece
of the current $\bm{J}$, but a number obtained by combining \emph{two} objects:
the syndrome (where the detectors fired) and a second field, fixed by the two
boundary sinks, that measures which side of the code each detector sits on.
Combining a measurement field with a source in this way is a \emph{pairing},
and the rest of this section constructs the field and proves that the two-sink
current evaluates the pairing exactly.

\textbf{The harmonic representative.}
Let $\partial G=\{v_b^{(1)},v_b^{(2)}\}$, one sink on each of the two code
boundaries linked by $\bar Z$, and let $I=V\setminus\partial G$.  Among all
$0$-cochains separating the two boundaries there is one that is
harmonic with respect to the metric $\bm{w}$:
\begin{equation}
  \bigl(\mathbf{L}(\bm{w})\bm{\chi}\bigr)_v = 0 \quad (v\in I),
  \qquad \chi_{v_b^{(1)}}=0, \qquad \chi_{v_b^{(2)}}=1 .
  \label{eq:chidef}
\end{equation}
Its differential $d_0\bm{\chi}$ is the harmonic representative of the generator
of the relative cohomology $H^1(G,\partial G;\Z_2)$ in the metric fixed by
$\star_1=\mathrm{diag}(\bm{w})$.  (Throughout, we use relative cohomology
$H^1(G,\partial G;\Z_2)$ for the harmonic representative $d_0\bm{\chi}$ against
which the syndrome is paired, and relative homology $H_1(G,\partial G;\Z_2)$ for
the error chains and the count $k$ of logical qubits; the readout is the pairing
between the two, and on a finite complex the two groups are isomorphic, so $k$ is
common to both.)  Topology fixes that there
is one such representative for the planar rotated code, since $k=1$; the metric
fixes \emph{which function} it is.  This is the precise sense in which the learned
weights matter: they do not select a component of a field, they deform the
representative against which the syndrome is measured.  The dimension of the
logical signal is a topological invariant; the representative that makes it
legible under a given noise model is exactly what MWPM obtains from the error
channel and what the GNN must learn from data.  The learned weights
$\bm{w}(\bm{\sigma})$ are the physical
content of the trained decoder.

\textbf{The readout evaluates the pairing exactly.}
The following identity is the theoretical basis for the two-sink readout of
Sec.~\ref{sec:twosink}.

\begin{proposition}[Sink current as a relative pairing]
\label{prop:pairing}
Let $\bm{\chi}$ solve Eq.~\eqref{eq:chidef} and let $\bm{\phi}$ solve the
Dirichlet problem
\begin{equation}
  \bigl(\mathbf{L}(\bm{w})\bm{\phi}\bigr)_v = \rho_v \quad (v\in I),
  \qquad \phi_{v_b^{(1)}} = \phi_{v_b^{(2)}} = 0 .
  \label{eq:phidef}
\end{equation}
With $\bm{J}=\mathrm{diag}(\bm{w})\mathbf{B}\bm{\phi}$, the net current drained
by each sink is
\begin{equation}
  J_{\rm in}\bigl(v_b^{(2)}\bigr) = \sum_{v\in I}\chi_v\,\rho_v,
  \qquad
  J_{\rm in}\bigl(v_b^{(1)}\bigr) = \sum_{v\in I}(1-\chi_v)\,\rho_v ,
  \label{eq:sinkcurrents}
\end{equation}
and therefore, with $\rho_v=\sigma_v$,
\begin{equation}
  s \;\equiv\; J_{\rm in}\bigl(v_b^{(1)}\bigr) - J_{\rm in}\bigl(v_b^{(2)}\bigr)
    \;=\; \sum_{v\,:\,\sigma_v=1}\bigl(1-2\chi_v\bigr).
  \label{eq:pairing}
\end{equation}
\end{proposition}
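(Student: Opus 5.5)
The plan is a discrete Green's reciprocity argument, using only the symmetry of $\mathbf{L}(\bm{w})=\mathbf{B}^\top\mathrm{diag}(\bm{w})\mathbf{B}$ and the complementary boundary conditions in Eqs.~\eqref{eq:chidef} and~\eqref{eq:phidef}.

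\textbf{Preliminaries.} I first fix what $J_{\rm in}$ means. Since $(\mathbf{L}\bm{\phi})_v=\sum_{u\sim v}w_{uv}(\phi_v-\phi_u)=(\mathbf{B}^\top\bm{J})_v$ is the net current \emph{leaving} $v$, the net current drained by a sink is
\begin{equation*}
J_{\rm in}(v_b^{(i)})=-(\mathbf{L}\bm{\phi})_{v_b^{(i)}}.
\end{equation*}
I also record well-posedness. With $w_e>0$ and every connected component of $G$ touching $\partial G$, the Dirichlet Laplacian $\mathbf{L}_{II}$ is positive definite. Hence $\bm{\chi}$ and $\bm{\phi}$ exist and are unique. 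This is the one hypothesis I would state explicitly; it holds for the DEM graph, because every detector connects through some path to a boundary. If it failed, the Dirichlet problems need not have solutions at all.

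\textbf{Reciprocity for the $v_b^{(2)}$ current.} I evaluate the scalar $\bm{\chi}^\top\mathbf{L}\bm{\phi}=\bm{\phi}^\top\mathbf{L}\bm{\chi}$ in two ways.
\begin{itemize}
\item On the right, $\phi$ vanishes on $\partial G$ and $(\mathbf{L}\bm{\chi})_v=0$ on $I$, so every term drops and the product is $0$.
\item On the left, I split the sum into $I$ and $\partial G$. The interior part gives $\sum_{v\in I}\chi_v\rho_v$. The boundary part gives $\chi_{v_b^{(1)}}(\mathbf{L}\bm{\phi})_{v_b^{(1)}}+\chi_{v_b^{(2)}}(\mathbf{L}\bm{\phi})_{v_b^{(2)}}=0-J_{\rm in}(v_b^{(2)})$.
\end{itemize}
Equating the two evaluations gives $J_{\rm in}(v_b^{(2)})=\sum_{v\in I}\chi_v\rho_v$.

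\textbf{The $v_b^{(1)}$ current.} I apply the same argument to $\bm{1}-\bm{\chi}$. It is also harmonic on $I$, because $\mathbf{L}\bm{1}=0$, and its boundary values are swapped. This yields $J_{\rm in}(v_b^{(1)})=\sum_{v\in I}(1-\chi_v)\rho_v$.

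As a consistency check, summing the two gives $\sum_I\rho_v$. This matches global charge conservation $\bm{1}^\top\mathbf{L}\bm{\phi}=0$, i.e.\ the total injected charge is drained by the two sinks.

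\textbf{The pairing identity.} Subtracting the two sink currents and setting $\rho_v=\sigma_v\in\{0,1\}$ on detectors gives $\sum_{v\in I}(1-2\chi_v)\sigma_v$. Only excited detectors contribute, which is exactly Eq.~\eqref{eq:pairing}.

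\textbf{Main obstacle.} The algebra is routine. The only real care is bookkeeping: the sign convention relating $J_{\rm in}$ to $(\mathbf{L}\bm{\phi})_{v_b}$ under the paper's orientation of $\mathbf{B}$, and the fact that the boundary rows of $\mathbf{L}\bm{\phi}$ are unconstrained (they are the sink currents) rather than set equal to $\rho$.
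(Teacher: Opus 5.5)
Your proposal is correct and follows the paper's proof essentially step for step. It uses the same reciprocity $\bm{\chi}^\top\mathbf{L}\bm{\phi}=\bm{\phi}^\top\mathbf{L}\bm{\chi}$ (right side zero, left side reducing to the interior sum plus the $v_b^{(2)}$ boundary term), the same sign convention $J_{\rm in}=-(\mathbf{L}\bm{\phi})_{v_b}$, and the same repetition with $\bm{1}-\bm{\chi}$ followed by subtraction, while your explicit connectivity hypothesis for well-posedness and the charge-conservation check are sound additions the paper only touches on in passing (Sec.~\ref{sec:twosink}, Appendix~\ref{app:jacobi}).
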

We refer to the weight $1-2\chi_v$ that each detector carries as its
\emph{vote}.
\begin{proof}
$\mathbf{L}=\mathbf{L}(\bm{w})$ is symmetric, so
$\bm{\chi}^\top\mathbf{L}\bm{\phi}=\bm{\phi}^\top\mathbf{L}\bm{\chi}$.
Splitting each side into the interior nodes $I$ and the two sinks,
\[
  \bm{\chi}^\top\mathbf{L}\bm{\phi}
  = \sum_{v\in I}\chi_v(\mathbf{L}\bm{\phi})_v
  + \chi_{v_b^{(1)}}(\mathbf{L}\bm{\phi})_{v_b^{(1)}}
  + \chi_{v_b^{(2)}}(\mathbf{L}\bm{\phi})_{v_b^{(2)}} .
\]
On $I$, $(\mathbf{L}\bm{\phi})_v=\rho_v$ by Eq.~\eqref{eq:phidef}; at the sinks
$\chi_{v_b^{(1)}}=0$ and $\chi_{v_b^{(2)}}=1$, so only the second sink retains its
boundary term,
\[
  \bm{\chi}^\top\mathbf{L}\bm{\phi}
  = \sum_{v\in I}\chi_v\rho_v + (\mathbf{L}\bm{\phi})_{v_b^{(2)}} .
\]
On the right-hand side, $(\mathbf{L}\bm{\chi})_v=0$ on $I$ by
Eq.~\eqref{eq:chidef}, while $\phi_{v_b^{(1)}}=\phi_{v_b^{(2)}}=0$ removes both
boundary terms, so $\bm{\phi}^\top\mathbf{L}\bm{\chi}=0$.  The two sides differ
only because $\bm{\phi}$ vanishes on \emph{both} boundaries whereas $\bm{\chi}$
does not---this asymmetry in the Dirichlet data is precisely what the readout
exploits.  Equating them gives
$(\mathbf{L}\bm{\phi})_{v_b^{(2)}} = -\sum_{v\in I}\chi_v\rho_v$.  Since
$\mathbf{B}^\top\bm{J}=\mathbf{L}\bm{\phi}$, the divergence of $\bm{J}$ at
$v_b^{(2)}$ is $-\sum_{v\in I}\chi_v\rho_v$, i.e.\ the current \emph{into}
$v_b^{(2)}$ is $\sum_{v\in I}\chi_v\rho_v$.  Repeating the argument with
$\bm{1}-\bm{\chi}$, which is harmonic on $I$ by Eq.~\eqref{eq:chidef} and takes
the value $1$ at $v_b^{(1)}$ and $0$ at $v_b^{(2)}$, gives the second identity;
Eq.~\eqref{eq:pairing} follows by subtraction.
\end{proof}

Equation~\eqref{eq:pairing} is the whole readout.  By the discrete maximum
principle $0\leq\chi_v\leq1$ at every node, so each excited detector's vote
$1-2\chi_v$ lies in $[-1,+1]$, running from $+1$ when the detector sits on one
code boundary to $-1$ on the other: the decoder casts a convex vote of the
excited detectors along the learned harmonic coordinate.  No approximation and no free
parameter enters.  A single boundary node forfeits this entirely: it identifies
the two code boundaries as one element of $\partial G$, so no function
separating them exists, $\bm{\chi}$ is undefined, and there is nothing to pair
the syndrome with.

\begin{figure}[t]
\centering
\includegraphics[width=0.40\textwidth]{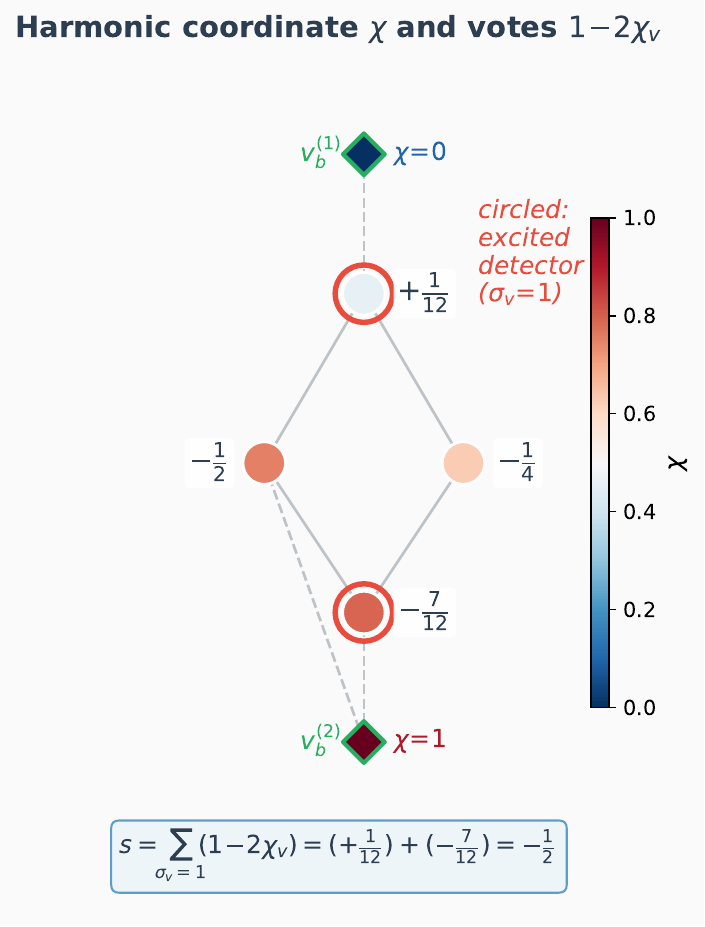}
\caption{%
  The pairing of Eq.~\eqref{eq:pairing} on the schematic graph of
  Fig.~\ref{fig:physics_current} (with $w=1$; the values shown are specific to
  this instance and change with the graph, the metric, and the syndrome).  Node
  colour is the harmonic coordinate
  $\chi$ ($0$ at $v_b^{(1)}$, $1$ at $v_b^{(2)}$, harmonic on the interior);
  each detector carries its vote $1-2\chi_v$, running from $+1$ at one code
  boundary to $-1$ at the other.  Summing the votes of the excited detectors
  (circled) gives $s=(+\tfrac{1}{12})+(-\tfrac{7}{12})=-\tfrac12$, the same
  scalar read as the sink-current difference $\Delta J$ in
  Fig.~\ref{fig:physics_current}(b).  
}
\label{fig:harmonic_vote}
\end{figure}

\subsection{Why a One-Dimensional Complex Makes the Readout Exact}
\label{sec:onedim}

It is worth making explicit why the current $\bm{J}$ admits the clean
two-term splitting of Eq.~\eqref{eq:hodge}, since it is a structural
consequence of the syndrome graph being a $1$-dimensional complex and not an
approximation we impose.  On a Riemannian domain the Hodge--Helmholtz
decomposition of any form has \emph{three} orthogonal parts,
\begin{equation}
  \bm{\omega} = \underbrace{d\bm{\alpha}}_{\text{exact}}
              \;\oplus\; \underbrace{\delta\bm{\beta}}_{\text{co-exact}}
              \;\oplus\; \underbrace{\bm{\gamma}}_{\text{harmonic}},
  \label{eq:hhk}
\end{equation}
where $d$ raises the form degree, $\delta$ (its metric adjoint) lowers it, and
$\bm{\gamma}$ is harmonic ($d\bm{\gamma}=\delta\bm{\gamma}=0$).  On a domain
\emph{with boundary} the three-way splitting requires boundary conditions to be
specified, and it is there that the distinction between absolute and relative
cohomology enters (Hodge--Morrey--Friedrichs).  This is not incidental to what
follows: the boundary is the whole content of our construction.

(An optional electromagnetic reading of these three terms---the Helmholtz
decomposition, and why the Aharonov--Bohm analogy is suggestive but ultimately
replaced by a grounded-resistor-network picture---is given in
Appendix~\ref{app:em}.)

A graph, however, is a simplicial complex of dimension one: it has $0$-cells
(nodes, carrying $0$-forms $\bm{\phi}$) and $1$-cells (edges, carrying $1$-forms
$\bm{J}$), but no $2$-cells.  With no $2$-form $\bm{\beta}$ available, the
co-exact sector of Eq.~\eqref{eq:hhk} is identically absent and the
decomposition of $\bm{J}$ collapses to the two terms of Eq.~\eqref{eq:hodge},
exact $\oplus$ harmonic: the graph current has no ``curl'' (solenoidal) part.


This is why the readout can be exact.  On a $1$-dimensional complex the only
obstruction to a clean statement would have been a co-exact sector, and there
is none; what remains is to say what the syndrome is measured against, which is
the content of Eq.~\eqref{eq:chidef} and Prop.~\ref{prop:pairing}.

The harmonic sector of the graph as a whole has dimension
$b_1 = |E|-|V|+1$, which counts every independent cycle of $G$.  A cycle of the
syndrome graph is a set of error mechanisms that triggers no detector, so most
of these cycles are local configurations with no logical content.  The logically
relevant subspace is smaller: it
is the homology \emph{relative to the boundary}, $H_1(G,\partial G;\Z_2)$,
i.e.\ cycles taken modulo those that can be closed through the code boundary.
Its dimension equals the number $k$ of logical qubits.

The Dirichlet condition has a twofold significance: the boundary nodes are
naturally read as an analytic device---a way of making the Poisson problem well
posed---while they also carry topological content.  \\  A
graph, as a combinatorial object, does not know which of its vertices form a
boundary; declaring the sinks \emph{is} the specification of $\partial G$.  This
matters because an error chain running from one code boundary to the other is
not a closed cycle, i.e. it has two endpoints, and therefore represents nothing in
the absolute homology $H_1(G)$.  It becomes a nontrivial cycle only in the
relative homology, where chains with endpoints on $\partial G$ are closed by
quotienting the boundary away (identifying all boundary nodes as a single point).  A single boundary node identifies the two code
boundaries as one element of $\partial G$, so a chain running between them is
relatively null-homologous and indistinguishable from a trivial loop; two
sinks keep them distinct and the chain acquires a nonzero class.  The same fact
appears analytically as the constant zero mode of $\mathbf{L}$: an unanchored
Laplacian is singular precisely because the boundary has not yet been specified,
and fixing it both removes the null mode and selects the homology.  The sinks
are therefore not an implementation detail but the boundary condition that
defines the topological object being measured.\\


\section{Method}
\label{sec:method}

\subsection{Overview}
\label{sec:overview}

Our decoder has three stages (Fig.~\ref{fig:architecture}):
\begin{enumerate}
  \item \textbf{GNN encoder}: maps the input syndrome $\bm{\sigma}$ to
    syndrome-adaptive edge weights $\bm{w}$.
  \item \textbf{Physics solver}: solves the discrete Gauss law
    $\mathbf{L}(\bm{w})\bm{\phi}=\bm{\rho}$ for the node potentials
    $\bm{\phi}$, then computes the edge current
    $\bm{J}=\mathrm{diag}(\bm{w})\mathbf{B}\bm{\phi}$.
  \item \textbf{Readout}: aggregates the edge-current vector $\bm{J}$ into a
    scalar logit for the logical error prediction.
\end{enumerate}

\begin{figure}[t]
\centering
\includegraphics[width=\textwidth]{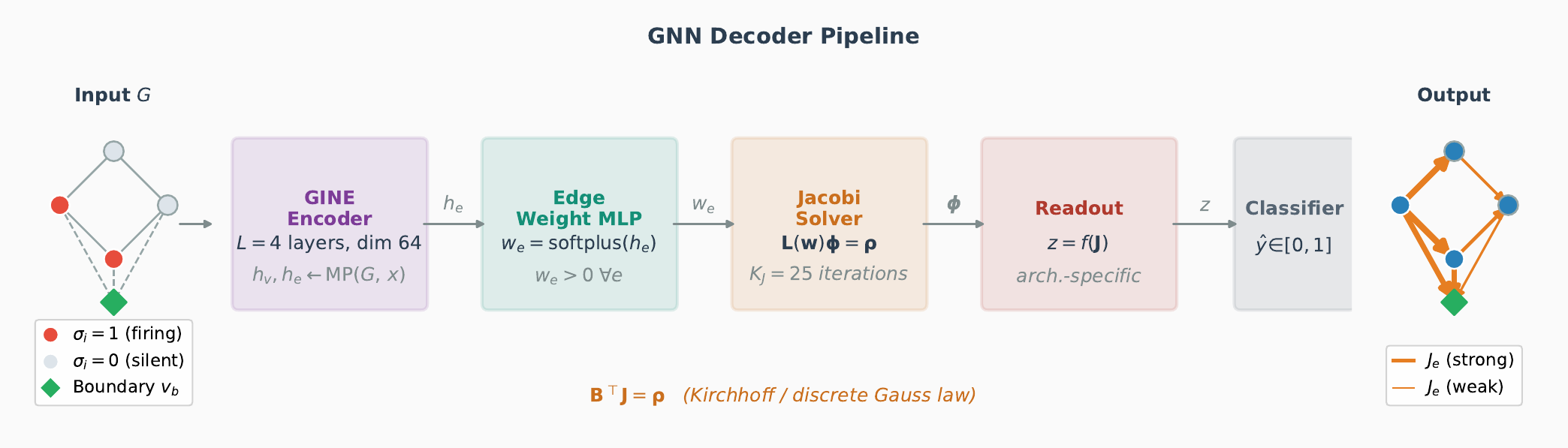}
\caption{%
  Pipeline of the physics-informed GNN decoder.
  The GINE encoder ($L=4$ message-passing layers, hidden dimension~64)
  maps the input syndrome graph $G$ to syndrome-adaptive edge embeddings,
  which an MLP converts to positive edge weights $w_e$.
  A Jacobi solver ($K_{\rm J}=25$ iterations) solves the discrete Poisson equation
  $\mathbf{L}(\bm{w})\bm{\phi}=\bm{\rho}$, yielding the node potentials
  $\bm{\phi}$ and the edge-current vector $J_e = w_e(\mathbf{B}\bm{\phi})_e$,
  which satisfies the discrete Gauss law $\mathbf{B}^\top\bm{J}=\bm{\rho}$
  exactly.  A readout (Sec.~\ref{sec:twosink}) reduces $\bm{\phi}$ or $\bm{J}$
  to a scalar, and a classifier MLP maps it to a logical-error
  probability $\hat{y}\in[0,1]$.
  The output panel illustrates schematic edge currents: arrow thickness
  is proportional to~$|J_e|$.
}
\label{fig:architecture}
\end{figure}

\subsection{Input Features}
\label{sec:features}

Each node $v\in V$ is assigned a feature vector
$\bm{x}_v\in\R^6$ encoding: syndrome bit value $\sigma_v\in\{0,1\}$,
boundary flag $\mathbbm{1}[v=v_b]$, spatial coordinates $(r_v, c_v)$,
and a stabiliser-type indicator (X/Z).  Each edge $e=(u,v)\in E$ has
features $\bm{a}_e\in\R^3$: a syndrome-product flag
$\sigma_u\cdot\sigma_v$, the MWPM log-likelihood cost $c_e$, and a
normalised spatial distance.  The source term is $\rho_v=\sigma_v$ for
internal nodes and $\rho_{v_b}=0$ for the boundary node.

\subsection{GNN Encoder}
\label{sec:encoder}

We use $L=4$ layers of Graph Isomorphism Network with Edge features
(GINE)~\cite{Hu2020} :
\begin{equation}
  \bm{h}_v^{(l+1)} = \mathrm{MLP}^{(l)}\!\left(
    (1+\epsilon)\bm{h}_v^{(l)}
    + \sum_{u\in\mathcal{N}(v)}\!\!\mathrm{ReLU}\!\left(
      \bm{h}_u^{(l)} + \bm{e}_{(u,v)}
    \right)
  \right),
\end{equation}
where $\bm{e}_{(u,v)}\in\R^h$ is a learnable projection of the edge features
and $h=64$ throughout and MLP, Multi-Layer Perceptron.  Node embeddings are initialised as
$\bm{h}_v^{(0)}=\mathrm{Linear}(\bm{x}_v)$.

\subsection{Edge Weights}
\label{sec:weights}

Given the final node embeddings $\{\bm{h}_v^{(3)}\}$, edge weights are
computed as:
\begin{equation}
  w_e = \exp\!\left(-c_e + \delta_e\right) + w_{\rm floor},
\label{eq:weights}
\end{equation}
where $c_e$ is the MWPM log-likelihood cost (from edge features) and
\begin{equation}
  \delta_e = \delta_{\max}\tanh\!\left(
    \frac{\tilde{\delta}_e}{\delta_{\rm temp}}
  \right), \qquad
  \tilde{\delta}_e = \mathrm{MLP}_{\rm head}\!\left(
    [\bm{h}_u; \bm{h}_v; \bm{a}_e]
  \right),
\label{eq:delta}
\end{equation}
is a syndrome-dependent correction to the MWPM weight.
The parameters $\delta_{\max}=2.0$, $\delta_{\rm temp}=2.0$, and
$w_{\rm floor}\in[10^{-5}, 10^{-3}]$ (decayed during training) enforce
numerical stability and prevent weight collapse.

The decomposition~\eqref{eq:weights} has a clear physical interpretation:
the base weight $e^{-c_e}$ recovers the edge weights that
MWPM~\cite{Edmonds1965,Dennis2002} assigns from the error channel, and
$\delta_e$ is a learned syndrome-adaptive correction to them.  When
$\delta_e\equiv 0$ the two decoders therefore share the same weights, and
differ only in how they use them: MWPM feeds them to a minimum-weight matching,
whereas our solver feeds them to the discrete Poisson equation~\eqref{eq:poisson}.
The learned correction $\delta_e$ is thus a deformation of the MWPM weights, not
of the MWPM algorithm.

\subsection{Poisson Solver (Jacobi)}
\label{sec:solver}

With weights $\bm{w}$ fixed, the decoder solves
\begin{equation}
  \mathbf{L}(\bm{w})\bm{\phi} = \bm{\rho}, \qquad
  \phi_{v} = 0 \ \ \text{for } v\in\partial G \;\;(\text{Dirichlet BC}),
\label{eq:poisson}
\end{equation}
using $K_{\rm J}=25$ steps of the Jacobi iterative method:
\begin{equation}
  \phi_v^{(k+1)} = \frac{\rho_v - \sum_{u\neq v} L_{vu}\phi_u^{(k)}}{L_{vv}},
  \qquad v\notin\partial G,
\end{equation}
with $\bm{\phi}^{(0)}=\bm{0}$.  In the single-sink architecture
$\partial G=\{v_b\}$; in the two-sink architecture
$\partial G=\{v_b^{(1)},v_b^{(2)}\}$ and both sinks are held at zero
(Sec.~\ref{sec:twosink}).  The update is differentiable with respect to
$\bm{w}$, allowing end-to-end training.  The Jacobi iteration is implemented
as a batched dense operation for efficiency on GPU.

\subsection{One sink J-Readout}
\label{sec:readout}

The edge-current vector $\bm{J}\in\R^{|E|}$ is computed via
Eq.~\eqref{eq:current}.  Each edge current $J_e$ is processed by a small
MLP to produce a per-edge feature $\bm{f}_e\in\R^{16}$.  A
gated-attention pooling operation~\cite{Li2015} aggregates
$\{\bm{f}_e\}_{e\in E}$ into a graph-level vector:
\begin{equation}
  \bm{z} = \mathrm{concat}\!\left[
    \bm{z}_{\rm gate},\; \overline{\bm{f}},\;
    \max_e\bm{f}_e,\; \mathrm{std}_e\bm{f}_e
  \right] \in \R^{64},
\end{equation}
which is passed to a two-layer MLP to produce the scalar logit.  We refer to
this single-boundary architecture as the \emph{single-sink} decoder; it uses
one virtual boundary node $v_b$ and reads the full edge-current field through
the gated-attention pool.

\subsection{Two-Sink J-Readout}
\label{sec:twosink}

As discussed in Sec.~\ref{sec:hodge}, the single-sink decoder drains all
syndrome charge into one boundary node, which merges the two code boundaries
linked by the logical operator.  No function separating them exists on that
graph, so there is no harmonic representative and no pairing to evaluate; the
gated-attention pool must instead learn a surrogate from the full current
field.

The two-sink readout resolves this by placing two boundary sinks, one on each
of the two boundaries connected by $\bar Z$, and evaluating the pairing
directly via Prop.~\ref{prop:pairing}.

\textbf{Boundary partition.}  Every boundary edge of the single-sink graph
(an edge from a detector to $v_b$, arising from a DEM error that lights a single
detector) is reassigned to one of two sinks $v_b^{(1)},v_b^{(2)}$ according to
the spatial coordinate of its detector endpoint along the logical axis.  The
logical axis is the code direction along which $\bar Z$ connects opposite
boundaries; we determine it empirically from the data (Sec.~\ref{sec:axis}) and
partition detectors by a threshold on that coordinate.

\textbf{Source and boundary conditions.}  We keep the syndrome as the Poisson
source, $\rho_v=\sigma_v$, so that the current is generated by the actual error
chain rather than by an externally imposed potential.  Both sinks are held at
the \emph{same} potential, $\phi_{v_b^{(1)}}=\phi_{v_b^{(2)}}=0$.

This is the hypothesis of Prop.~\ref{prop:pairing}, and it is not a gauge
choice.  It is the statement that the two code boundaries form a single
Dirichlet set while remaining distinct vertices, which is precisely the
quotient defining $H_1(G,\partial G;\Z_2)$: chains with endpoints on
$\partial G$ are closed by quotienting the boundary away, and a common
potential on $\partial G$ is that quotient written analytically.  Holding both
sinks fixed also renders $\mathbf{L}$ positive definite on the free nodes, so
the Jacobi iteration converges and no mean-subtraction of the source is
required; the drainage asymmetry is then a function of where the syndrome
charge sits relative to $\bm{\chi}$, and of nothing else.

Imposing a nonzero potential \emph{difference} across the two sinks instead (a
Laplace, source-free formulation) makes the current depend only on the graph
geometry and not on the syndrome, and does not train; we report this negative
control in Sec.~\ref{sec:results}.  Prop.~\ref{prop:pairing} explains that
control: in the source-free case $\bm{\phi}=\bm{\chi}$ exactly, and
$\bm{\chi}$ depends on the sample only through $\bm{w}(\bm{\sigma})$.

A remark on nomenclature is in order, since the construction may appear to be a
source--sink pair.  It is not: \emph{both} boundary nodes act as sinks.  The
source is the syndrome itself, distributed over the excited detectors, and
current drains from there into the two boundaries.  The two nodes play
physically equivalent roles; interchanging them maps $\bm{\chi}\mapsto
\bm{1}-\bm{\chi}$ and flips the sign of $s$, which is a relabelling of the two
logical classes and not a change of physics.  Which of the two boundaries
receives the larger net current is determined by the error configuration,
emerging sample by sample, and that is what encodes the logical class.  Had we
instead imposed a potential difference across the two boundaries, that sign
would be fixed by the boundary condition and would carry no information; this
is the negative control noted above.

\textbf{Harmonic readout.}  The logical signal is the net current between the
two sinks,
\begin{equation}
  s \;=\; \sum_{e\ni v_b^{(1)}} J_e \;-\; \sum_{e\ni v_b^{(2)}} J_e,
  \label{eq:twosink_signal}
\end{equation}
which by Prop.~\ref{prop:pairing} equals the pairing
$\sum_{v:\sigma_v=1}(1-2\chi_v)$ of the syndrome with the harmonic
representative of the relative cohomology generator.

Two points deserve emphasis, as they clarify what the second sink does and does
not do.  First, it does not remove the exact part of the current from the
field.  The field is exact in its entirety (Eq.~\eqref{eq:jharm_zero}) and
there is nothing to remove.  What the second sink supplies is the boundary
data without which the relative class, and hence the harmonic representative
$\bm{\chi}$ that measures it, does not exist at all.  The distinction matters:
the second sink is not a filter applied to a field, it is the datum that makes
the field's logical content definable.

Second, $\bm{\chi}$ never has to be constructed at inference.  By
Prop.~\ref{prop:pairing} the pairing is already reported by the currents
incident on the two sinks, which is why the readout is a single scalar and
carries no learned parameters.  That the same quantity can be evaluated
directly from $\bm{\chi}$ gives an independent check on the implementation, and
is the explicit harmonic readout we return to in Sec.~\ref{sec:future}.

A small MLP maps the scalar $s$ to the logit.  This is a calibration of a single
pre-computed scalar, not a learned readout: the pairing itself is evaluated in
closed form by Prop.~\ref{prop:pairing}, with no parameters.  The calibration
absorbs the irreducible ambiguity of near-degenerate (Nishimori) error
configurations that even MWPM misdecodes; at larger code distance, where the two
boundaries are farther apart and the current field is larger and sparser, the
pairing is a noisier quantity to read (Sec.~\ref{sec:scaling}), and the
calibration correspondingly does more of the work --- a contingent property of
the signal-to-noise, not of the construction, which remains exact at every
distance.  In contrast to the single-sink gated-attention pool,
which must learn a surrogate for the pairing from the full $|E|$-dimensional
current field, the two-sink readout evaluates the pairing in closed form and
leaves only the calibration to training.

\textbf{Why not more than two sinks.} Proposition~\ref{prop:pairing} shows that
two sinks suffice for $k=1$; it is worth noting that more would not help.  Two
sinks are not a minimal choice one could refine by adding more: they are the
exact choice.  Partitioning the boundary detectors into four groups on a
single-logical-qubit code does not increase resolution: two sub-sinks lying on
the \emph{same} logical boundary belong to the same relative class, so no
harmonic coordinate separates them and their pairing with the syndrome is
identically zero; they contribute no logical signal while adding currents the
calibration must learn to ignore.  More sinks than $2k$ therefore impose a
richer boundary structure than the code possesses, degrading rather than
improving the readout.  Four sinks become the correct construction only when
$k=2$ on a code that possesses four boundary components---for instance a planar
code with holes, or two lattice-surgery patches---with two independent boundary
pairs, one per logical qubit.  Closed-surface codes such as the toric code fall
outside the construction as stated: having no boundary at all, they admit no
boundary sinks, and their $k=2g$ logical operators would have to be read against
non-contractible cycles rather than against a relative class.  The number of
readout points is thus not a design choice but is dictated by the topology.

\subsection{Alternative Readouts (Ablation)}
\label{sec:alt_readouts}

To isolate the role of the physics solver and the readout choice, we implement
two further variants on the single-sink architecture, keeping the GNN encoder,
edge-weight parametrisation~\eqref{eq:weights}, and readout MLP \emph{identical}
to the single-sink $J$-readout baseline.  These serve as a controlled ablation
of the physics content of the readout (Sec.~\ref{sec:readout_ablation}).

\textbf{$\phi$-readout}: The Poisson equation~\eqref{eq:poisson} is solved
with the same damped Jacobi scheme.  However, the logit is computed from the
node potentials $\bm{\phi}$ (pooled over the $N-1$ internal nodes via a
gated-attention mechanism) rather than from the edge currents $\bm{J}$.
As we discuss in Sec.~\ref{sec:readout_interp}, the node-potential vector is
a 0-form on the syndrome graph: it carries no orientation information and
does not encode the topological flux directly.

\textbf{Diffusion-readout}: Instead of solving the steady-state Poisson
equation, we evolve the initial density $\bm{\rho}$ under the diffusion
operator $e^{-t\mathbf{L}(\bm{w})}$ for a total time $t$, approximated by
$K_{\rm D}=15$ forward-Euler steps:
\begin{equation}
  \bm{u}^{(k+1)} = \bm{u}^{(k)} - \Delta t\,\mathbf{L}(\bm{w})\bm{u}^{(k)},
  \qquad \bm{u}^{(0)} = \bm{\rho},
\label{eq:diffusion}
\end{equation}
where $\Delta t$ is a learnable scalar constrained to $(0,\Delta t_{\max})$ via
a sigmoid parametrisation.  The readout uses the diffused current
$\bm{J}_{\rm diff}=\mathrm{diag}(\bm{w})\mathbf{B}\bm{u}$.
Unlike the Poisson current~\eqref{eq:current}, $\bm{J}_{\rm diff}$ does
\emph{not} satisfy the discrete Gauss law at finite~$t$:
$\mathbf{B}^\top\bm{J}_{\rm diff} = \mathbf{L}(\bm{w})\bm{u}(t) \neq \bm{\rho}$
(equality holds only as $t\to\infty$, recovering the Poisson solution).
The physical significance of this violation is discussed in
Sec.~\ref{sec:readout_interp}.

\subsection{Loss Function}
\label{sec:loss}

The total training loss is
\begin{equation}
  \mathcal{L} = \mathcal{L}_{\rm BCE}
              + \lambda_{\rm res}\,\mathcal{L}_{\rm res}
              + \lambda_{\delta}\,\mathcal{L}_{\delta},
\label{eq:loss}
\end{equation}
where $\mathcal{L}_{\rm BCE}=\mathrm{BCE}(\sigma(\text{logit}), y)$ is the
binary cross-entropy with class-balanced positive weight,
$\mathcal{L}_{\rm res}=\sum_{v\notin\partial G}
\bigl[(\mathbf{L}\bm{\phi})_v-\rho_v\bigr]^2$ is a
physics-residual penalty (active only for $J$- and $\phi$-readouts).  The
restriction to $v\notin\partial G$ is not cosmetic: the Dirichlet nodes do not
satisfy the Gauss law by construction---they absorb the current---so including
them would penalise $\sum_{e\ni v_b^{(1)}}J_e$ and $\sum_{e\ni v_b^{(2)}}J_e$,
that is, the readout signal itself, driving $s\to0$.  Finally,
$\mathcal{L}_{\delta}=\norm{\tilde{\bm{\delta}}}^2$ regularises the raw
weight corrections to prevent tanh saturation.  Weights are set to
$\lambda_{\rm res}=1.0$, $\lambda_{\delta}=10^{-3}$ for $J$/$\phi$, and
$\lambda_{\rm res}=0$ for the diffusion model.

\section{Experiments}
\label{sec:experiments}

\subsection{Dataset}
\label{sec:dataset}

We use the distance-$5$ \stimcirc{}
circuit with $r=5$ syndrome rounds, giving a syndrome graph of $N=121$ nodes
(120 space-time detectors plus one virtual boundary).  Syndromes are generated
with \textsc{Stim}~\cite{Gidney2021} under circuit-level depolarising noise at
$p\in\{0.005,0.010,0.015\}$.  Each sample stores the syndrome graph with
node/edge features, the MWPM prediction $\hat{y}_{\rm MWPM}$, and the
ground-truth logical-error label $y\in\{0,1\}$.  For each $(p,\text{seed})$ the
data are split stratified by class into 80\% train / 10\% validation / 10\%
test; we use five seeds, each generating its own split, so that architecture
comparisons are paired on the same split and reported as statistics over seeds.

\subsection{Training}
\label{sec:training}

All models use the Adam optimiser~\cite{Kingma2015} with base learning rate
$10^{-3}$ and batch size 32.  The learning-rate schedule is a linear warmup over
the first 20 epochs followed by cosine decay to $10^{-5}$; the warmup removes
the dead-start failures we observed with a bare cosine schedule at higher $p$.
Validation balanced accuracy is evaluated every epoch, and we apply early
stopping with a patience of 60 epochs (maximum 600 epochs).  A floor-decay
schedule reduces $w_{\rm floor}$ from $10^{-1}$ to $10^{-3}$ over training
(Sec.~\ref{sec:floor}).  Model selection uses the checkpoint of maximum
validation balanced accuracy.

\textbf{Floor-decay guardrail.}
\label{sec:floor}
Premature reduction of $w_{\rm floor}$ can destabilise the Jacobi solver;
we anneal it geometrically over the schedule so that the Poisson residual
$\|\mathbf{L}\bm{\phi}-\bm{\rho}\|/\|\bm{\rho}\|$ stays small throughout
training.

\subsection{Evaluation Metric}
\label{sec:metric}

We report balanced accuracy ${\rm BA} = \frac{1}{2}({\rm TPR}+{\rm TNR})$ at the
validation-selected threshold, and the threshold-independent Area Under the Curve (AUC).  Because the
decision threshold selected on a noisy validation set is itself a source of
seed-to-seed variance at high $p$, the AUC is the more robust metric there and
we quote both.  Architecture comparisons use the paired (same-split) difference
$\Delta\mathrm{BA}$ across seeds.

\section{Results}
\label{sec:results}

\subsection{Logical Axis Determination}
\label{sec:axis}

The two-sink readout requires knowing which code direction the logical
operator $\bar Z$ connects.  We determine this empirically, without appeal to
the lattice convention (which varies with how \textsc{Stim} emits detector
coordinates at a given distance).  For each spatial axis we partition the
boundary detectors by the median coordinate and test how well the purely
topological rule ``the error chain touches boundary detectors on both sides''
predicts the true logical label $y$.  The axis for which this rule correlates
with $y$ is the logical axis; the other axis is uncorrelated (near-random).
This zero-training test also yields a useful diagnostic: the balanced accuracy
of the bare topological rule measures how much of the logical signal is already
present in the completion structure of the syndrome, before any weights are
learned.

\subsection{Two-Sink vs.\ Single-Sink}
\label{sec:twosink_results}

The controlled comparison between the single-sink and two-sink decoders is
reported in Table~\ref{tab:readout}, together with the other two readouts.  The
architectures share the same GINE encoder,
edge-weight parametrisation, Jacobi solver, loss, optimiser, learning-rate
schedule (linear warmup followed by cosine decay), early-stopping criterion,
and per-seed data split; they differ only in the boundary structure and the
readout (full-field gated attention for single-sink, harmonic two-sink signal
of Eq.~\eqref{eq:twosink_signal} for two-sink).  We report the balanced
accuracy (BA) at the validation-selected threshold and the threshold-independent
AUC, each as mean\,$\pm$\,standard deviation over five seeds, together with the
MWPM reference on the same code and noise.

\textbf{The two-sink readout is statistically equivalent to both the
single-sink current readout and the best single-sink readout ($\phi$).}
At $p=0.005$ the two-sink decoder reaches $\mathrm{BA}=0.898\pm0.013$
(Table~\ref{tab:readout}), between the single-sink $J$-readout's
$0.893\pm0.009$ and the $\phi$-readout's $0.910\pm0.018$.  Neither paired
difference is resolved at five seeds: two-sink minus single-sink is
$+0.005\pm0.012$ ($t=0.9$, $p=0.43$), and two-sink minus $\phi$ is
$-0.012\pm0.024$ ($t=-1.1$, $p=0.33$).  We therefore do \emph{not} claim that
the two-sink decoder is the most accurate, nor that it improves on the
single-sink pool: within seed variance it ties both.  As discussed in
Sec.~\ref{sec:readout_ablation}, this parity is the informative result: a single
topological scalar, carrying no learned readout parameters, saturates the same
logical information that the full-field readouts access, exactly as expected
when the logical signal is one-dimensional ($k=1$).  The value of the harmonic
readout is interpretability, parameter economy, and principled generalisation
to $k>1$, not raw accuracy.

\textbf{Behaviour as $p$ grows.}
The quantitative comparison of Table~\ref{tab:readout} is at $p=0.005$, the
regime in which the distance-$5$ logical signal is well resolved.  As $p$
increases the signal weakens and, past the distance-$5$ threshold, all decoders
including MWPM degrade towards the random baseline; in that regime the code
protects too little for a meaningful comparison among readouts, and we do not
report one.  Characterising the high-$p$ regime---which on larger codes requires
a dedicated treatment of the weak-signal limit---is left to future work.

\textbf{Negative control.}
As a control, we drive the two-sink current with an imposed potential difference
rather than the syndrome source.  In this Laplace formulation the syndrome
enters only through the weights, and the decoder does not train
($\mathrm{AUC}=0.50$): with uniform weights the current is the same for every
sample, so no gradient reaches the encoder.  The logical signal must therefore
enter through the source $\rho_v=\sigma_v$, not through the boundary conditions.
Proposition~\ref{prop:pairing} predicts exactly this: in the source-free case
$\bm{\phi}=\bm{\chi}$, which is sample-independent.

\subsection{Readout Comparison}
\label{sec:readout_ablation}

Table~\ref{tab:readout} compares all four readouts at fixed noise
($d=5$, $p=0.005$, five seeds, identical 600-epoch protocol with warmup and
early stopping).  Two of them read the edge current $\bm{J}$ and differ only in
boundary topology: the single-sink $J$-readout pools the current at one
boundary node, while the two-sink $J$-readout reads the net current between the
two sinks, which---by the construction of Sec.~\ref{sec:twosink}---is the
harmonic (topological) projection.  The other two read different quantities:
the $\phi$-readout (node potentials) and the diffusion readout.

\begin{table}[t]
\centering
\caption{%
  Readout comparison at $d=5$, $p=0.005$, five seeds, identical protocol.
  BA at the validation-selected threshold and threshold-independent AUC, as
  mean\,$\pm$\,sample std.  The two-sink, $\phi$, and single-sink $J$ readouts
  are statistically equivalent at five seeds; all three beat the diffusion
  readout.  The last row is the minimum-weight perfect matching reference on the
  same DEM, which none of the learned decoders reaches.
}
\label{tab:readout}
\begin{tabular}{lcc}
\toprule
Readout & BA & AUC \\
\midrule
$J$-readout (two-sink)   & $0.898\pm0.013$ & $0.958\pm0.010$ \\
$\phi$-readout           & $0.910\pm0.018$ & $0.974\pm0.006$ \\
$J$-readout (single-sink) & $0.893\pm0.009$ & $0.957\pm0.007$ \\
Diffusion-readout        & $0.875\pm0.023$ & $0.953\pm0.021$ \\
\midrule
MWPM (reference)         & $0.980$ & --- \\
\bottomrule
\end{tabular}
\end{table}

Two features stand out, and both revise the expectation set by the original
DEC argument.  First, \emph{there is no readout hierarchy of the predicted
form}.  The naive DEC reasoning suggested $J>\phi>\text{diffusion}$, with the
$1$-form current superior to the $0$-form potential and the non-conserved
diffusion current failing outright.  The clean multi-seed runs show a different
ordering: the $\phi$-readout ties the two-sink readout for best, the diffusion
readout trains successfully (Sec.~\ref{sec:readout_interp}) rather than
collapsing, and the single-sink $J$-readout is only
third.  The catastrophic diffusion failure reported in preliminary experiments
was an artefact of training without warmup, not an intrinsic gradient pathology.

Second, and more informative, \emph{the two best readouts (two-sink and
$\phi$) are statistically equivalent} (paired difference
$\Delta\mathrm{BA}=-0.012\pm0.024$, $t=-1.1$, over the five same-split seeds).  This
parity is the central finding, and it is exactly what the Hodge picture
predicts.  The $\phi$-readout extracts the logical signal empirically from the
full node-potential field with a pooling MLP; the two-sink readout evaluates it
in closed form as the pairing $\langle\bm{\chi},\bm{\sigma}\rangle$
(Eq.~\eqref{eq:twosink_signal}), a single scalar.  That a one-dimensional
topological readout matches a full-field readout is not a coincidence but a
consequence of the logical signal being one-dimensional
($k=\dim H_1(G,\partial G;\Z_2)=1$, Sec.~\ref{sec:onedim}): a single harmonic
scalar already saturates the
information the field readout accesses.

The value of the harmonic readout is therefore not higher accuracy but
\emph{interpretability and principled generalisation}.  It equals the best
empirical readout while using no learned readout parameters; it states
\emph{why} it works (it evaluates a topological invariant); and it generalises
in a controlled way, one per boundary pair, whereas the field readouts have no comparable
structure.  The contrast with the single-sink $J$-readout sharpens the point:
reading the current field alone gives $0.893$, whereas evaluating the
pairing through the two-sink boundary structure gives $0.898$: a difference of
$+0.005\pm0.012$ ($t=0.9$) that does not survive seed variance.  The two
readouts are equivalent in accuracy; what the boundary structure buys is not a
higher number but a closed-form, parameter-free evaluation of the same logical
pairing the single-sink pool must learn to approximate.

\subsection{Scaling to Larger Distance}
\label{sec:scaling}

The $d=5$ comparison shows the two-sink and $\phi$ readouts on a par.  A natural
question is whether this picture persists at larger code distance, where the
logical signal occupies a larger and sparser current field.  We therefore repeat
the comparison at $d=7$, $p=0.005$, five seeds, under the same protocol (with a
longer patience and a minimum epoch count, since the larger graph converges more
slowly).  Table~\ref{tab:d7} reports the results.

\begin{table}[t]
\centering
\caption{%
  Readout comparison at $d=7$, $p=0.005$, five seeds, all runs converged.
  BA at the validation-selected threshold and threshold-independent AUC, as
  mean\,$\pm$\,sample std.  The two-sink readout significantly exceeds the
  single-sink $J$-readout (paired $t=4.1$, $p=0.014$) and ties the
  $\phi$-readout (paired $t=1.6$, not significant; see text on the one $\phi$
  dead-start seed).  The last row is the minimum-weight perfect matching
  reference on the same DEM.
}
\label{tab:d7}
\begin{tabular}{lcc}
\toprule
Readout & BA & AUC \\
\midrule
$J$-readout (two-sink) & $0.857\pm0.011$ & $0.935\pm0.006$ \\
$\phi$-readout      & $0.815\pm0.061$ & $0.895\pm0.030$ \\
$J$-readout (single-sink) & $0.811\pm0.021$ & $0.885\pm0.017$ \\
Diffusion           & $0.760\pm0.080$ & $0.860\pm0.041$ \\
\midrule
MWPM (reference)    & $0.989$ & --- \\
\bottomrule
\end{tabular}
\end{table}

At $d=7$ the two-sink readout reaches $\mathrm{BA}=0.857\pm0.011$, and the
comparison sharpens relative to $d=5$ in one direction while holding in the
other.  Against the single-sink $J$-readout ($0.811\pm0.021$) the paired
same-split difference is $+0.046\pm0.025$ ($t=4.1$, $p=0.014$; positive on all
five seeds, Wilcoxon $p=0.06$): a genuine and significant advantage, and a
stronger result than the parity seen at $d=5$.  Isolating the harmonic pairing
through the boundary structure thus helps more, not less, as the current field
grows larger and sparser---consistent with a full-field pool having a harder
job at larger distance.

Against the $\phi$-readout ($0.815\pm0.061$) the two-sink readout remains
statistically equivalent: the paired difference is $+0.042\pm0.057$ but
$t=1.6$ ($p=0.18$), and it is not robust.  Almost all of the nominal gap comes
from a single seed on which the $\phi$-readout hit a dead-start
($\mathrm{BA}=0.713$, against $0.82$--$0.86$ on the other four); excluding that
seed the difference falls to $+0.018$ and remains non-significant, and a
Wilcoxon signed-rank test over all five seeds likewise does not resolve it
($p=0.13$).  We therefore read the two-sink and $\phi$ readouts as on a par at
$d=7$ as at $d=5$, and do \emph{not} claim an advantage over $\phi$: the
higher two-sink \emph{mean} reflects a $\phi$ baseline failure on one seed, not
a systematic separation.  The defensible $d=7$ statement is thus that the
closed-form scalar significantly exceeds the single-sink pool and ties the best
full-field readout, exactly as expected when the logical signal remains
one-dimensional.

Two observations support the interpretation.  First, the bare topological signal
(the harmonic readout at uniform weights, $w=1$, evaluated from the corrected
harmonic coordinate $\bm{\chi}$) drops from $0.63$ at $d=5$ to $0.56$ at $d=7$:
the harmonic coordinate $\bm{\chi}$ varies over more detectors at $d=7$, so the
per-detector votes $1-2\chi_v$ are individually smaller and the syndrome's
signal-to-noise against $\bm{\chi}$ falls.  That the trained decoder nonetheless
reaches $0.857$ shows that the learned metric $\bm{w}(\bm{\sigma})$
reconcentrates the votes onto the relevant detectors; the weak bare signal makes
the learned metric more necessary at larger distance, not less.  Second, we verified that the boundary
partition is optimal: sweeping the split threshold yields a symmetric $72/72$
partition of the boundary detectors at the operating point, with the bare BA
flat across all balanced partitions, the signal is genuinely weak, not an
artefact of a mis-placed cut.

\section{Discussion}
\label{sec:discussion}

\subsection{DEC Interpretation}
\label{sec:dec_interp}

The central insight of our formulation is that the logical error signal
$\bm{J}=\mathrm{diag}(\bm{w})\mathbf{B}\bm{\phi}$ is a \emph{discrete
1-form} (a co-chain) on the syndrome graph, not a scalar or a node
embedding.  What carries the logical-error parity is not a component of
$\bm{J}$---the solver's current is a pure gradient flow (Eq.~\eqref{eq:jharm_zero})---
but the pairing of the syndrome with the harmonic coordinate $\bm{\chi}$ fixed
by the two boundary sinks.  This pairing is an exact discrete quantity, defined
by the boundary data alone and carrying no free parameters
(Prop.~\ref{prop:pairing}).

The connection to MWPM is transparent: setting $\delta_e\equiv 0$
recovers the MWPM weights, on which the solver then runs its discrete Poisson
readout in place of a matching.  The GNN learns syndrome-adaptive corrections
$\bm{\delta}(\bm{\sigma})$ on top of these weights, in principle capturing
correlations that a fixed-weight matching ignores.

\textbf{Generalisation to $k$ logical qubits.}
The two boundary sinks in our architecture are not an arbitrary implementation
choice: they encode the fact that the planar rotated surface code has exactly
$k=1$ logical qubit, read against its two code boundaries.  The connection can
be made precise via the DEC framework.

On a connected graph with no boundary conditions, the weighted Laplacian
$\mathbf{L}(\bm{w})$ is positive semi-definite with a one-dimensional null
space spanned by the constant vector $\mathbf{1}$.  Fixing the two sinks removes
this null mode and, more importantly, specifies $\partial G$: it is the datum
that turns the error chain into a relative cycle and fixes the harmonic
coordinate $\bm{\chi}$ against which the syndrome is paired
(Sec.~\ref{sec:hodge}).  For $k=1$ there is one such coordinate, and the pairing
of Prop.~\ref{prop:pairing} encodes the logical-error class.

For a code encoding $k>1$ logical qubits, the same reasoning applies with $k$
independent boundary-sink \emph{pairs}, i.e.\ $2k$ sinks in total.  Each logical
qubit $i$ is associated with a pair of topologically distinct code boundaries;
declaring that pair fixes a harmonic coordinate $\bm{\chi}^{(i)}$, and
Prop.~\ref{prop:pairing} applied to it returns the pairing
$\langle\bm{\chi}^{(i)},\bm{\sigma}\rangle$, which feeds an independent output
head predicting the $i$-th logical error.  The architecture extends by adding
$2k$ sinks and $k$ readout heads with no other structural change, provided the
code has $2k$ boundary components against which the pairs can be defined.  This
proviso is not vacuous: closed-surface codes such as the toric code have no
boundary, and their logical operators are read against non-contractible cycles
rather than against a relative class, which the present construction does not
cover.

This generalisation is transparent within the DEC language: the boundary sink
pairs correspond to the generators of the relative homology group
$H_1(G,\partial G;\mathbb{Z}_2)$ of the syndrome graph, and the $k$ pairings
measure the coupling of the error chain with each generator.  The two-sink
readout of Sec.~\ref{sec:twosink} is precisely the $k=1$ instance realised
correctly: the two sinks resolve the single logical cycle into a pair of
boundary sinks, so that the net current between them evaluates the pairing of
Eq.~\eqref{eq:pairing}.  The single-sink decoder collapses this pair into one
node, leaving no separating harmonic coordinate to pair against, which is why,
at low $p$ where the logical signal is well resolved, the two-sink readout
improves on it (Sec.~\ref{sec:twosink_results}).

\subsection{Physical Interpretation of the Readout Comparison}
\label{sec:readout_interp}

The multi-seed results (Sec.~\ref{sec:readout_ablation}) show the two-sink and
$\phi$ readouts on a par, the diffusion readout trailing but training
successfully, and the single-sink $J$-readout weakest.  This subsection explains
the mechanisms behind that ordering---why conservation lets the two-sink readout
expose the topological signal directly, why the $\phi$-readout accesses the same
information less directly, and why the diffusion readout suffers a
vanishing-gradient handicap that nonetheless proves non-fatal.  We had initially
expected these mechanisms to produce a strict hierarchy $J>\phi>$~diffusion; the
data do not bear that out, and the analysis below is therefore diagnostic---it
identifies which effects matter and which we had overstated---rather than
predictive.

\textbf{Kirchhoff conservation and topological content.}
The Poisson edge current $\bm{J}=\mathrm{diag}(\bm{w})\mathbf{B}\bm{\phi}$
satisfies the discrete Gauss law
\begin{equation}
  \mathbf{B}^\top \bm{J} = \mathbf{L}(\bm{w})\bm{\phi} = \bm{\rho},
\label{eq:kirchhoff}
\end{equation}
i.e., Kirchhoff's current law: the net current flowing into each internal
node equals the local source $\rho_v$.  Conservation is what lets the two-sink
readout be read at the boundary rather than in the bulk: because $\bm{J}$ is
divergence-free away from the sinks, the difference of the currents they drain
equals the pairing $\sum_{\sigma_v=1}(1-2\chi_v)$ of Prop.~\ref{prop:pairing},
independently of any particular set of edges.  (The flux across a cut, by
contrast, returns the enclosed syndrome charge, a gradient quantity that does
\emph{not} encode the logical class; this is why the readout is a sink-current
difference and not a cut integral, Fig.~\ref{fig:physics_current}.)

The $\phi$-readout does not directly access this topological invariant.
The node potentials $\bm{\phi}$ are a discrete 0-form: they encode scalar
values at nodes, with no intrinsic orientation.  Even though $\bm{\phi}$
and $\bm{J}$ carry the same information in principle (one determines the
other via $\bm{J}=\mathrm{diag}(\bm{w})\mathbf{B}\bm{\phi}$), the
pooled scalar field $\bm{\phi}$ exposes it less directly to the readout
MLP: the logical-class distinction requires the readout to implicitly
reconstruct path-oriented information from a smooth scalar field,
which is a harder function to learn.

The diffusion current $\bm{J}_{\rm diff}=\mathrm{diag}(\bm{w})\mathbf{B}\bm{u}(t)$
at finite~$t$ violates~\eqref{eq:kirchhoff}:
$\mathbf{B}^\top\bm{J}_{\rm diff} = \mathbf{L}(\bm{w})\bm{u}(t) \neq \bm{\rho}$.
The diffused field $\bm{u}(t)$ does not satisfy Poisson balance, so the
current it induces is \emph{not conserved}.  In a non-conserved current
field there is no well-defined net flux between boundaries, and the
topological-class signal is absent.  The Poisson steady state can be
recovered by taking $t\to\infty$, but this limit is inaccessible with a
finite number of Euler steps.

\textbf{Gradient path to the edge weights.}
All three readouts are trained with the same binary cross-entropy loss
$\mathcal{L}_{\rm BCE}$ and the same physics-residual regularisation.
The key difference lies in how the loss gradient reaches the edge weights~$\bm{w}$.

For the $J$-readout, the gradient $\partial J_e / \partial w_e$ has an
explicit, direct component:
\begin{equation}
  \frac{\partial J_e}{\partial w_e}
  = \frac{\partial}{\partial w_e}\bigl[w_e(\mathbf{B}\bm{\phi})_e\bigr]
  = (\mathbf{B}\bm{\phi})_e + w_e\,\frac{\partial(\mathbf{B}\bm{\phi})_e}{\partial w_e}.
\label{eq:grad_J}
\end{equation}
The first term, $(\mathbf{B}\bm{\phi})_e = \phi_u - \phi_v$ (the potential
drop across edge~$e$), provides a direct gradient signal that does not pass
through the Jacobi solver.  As long as a syndrome is present ($\bm{\rho}\neq\bm{0}$),
this term is generically nonzero, giving a strong first-order signal.

For the $\phi$-readout, the logit depends on $w_e$ only through the
Poisson solution $\bm{\phi} = \mathbf{L}(\bm{w})^{-1}\bm{\rho}$.
The implicit-function theorem gives
$\partial\bm{\phi}/\partial w_e = -\mathbf{L}(\bm{w})^{-1}(\partial\mathbf{L}/\partial w_e)\bm{\phi}$,
a valid but indirect gradient that requires the full Laplacian inverse.
With only $K_{\rm J}=25$ Jacobi iterations the unrolled computation is an
approximation to this inverse; the resulting truncated gradient is weaker
and can miss the relevant directions.  More importantly, this is the
\emph{only} gradient path to~$\bm{w}$ for the $\phi$-readout, since
the readout does not depend on $\bm{w}$ explicitly.

For the diffusion-readout, the gradient passes through $K_{\rm D}$ matrix
multiplications $(I - \Delta t\,\mathbf{L})^{K_{\rm D}}$, each of which attenuates
eigenmodes by a factor $(1-\Delta t\,\lambda_i)$.  For eigenvalues
$\lambda_i \sim 1$ and $\Delta t\sim 0.1$--$0.5$, the $K_{\rm D}$-step product
$(1-\Delta t\,\lambda_i)^{K_{\rm D}}$ can be as small as $10^{-5}$ for $K_{\rm D}=15$,
leading to severe vanishing gradients analogous to those in unrolled RNNs.
The direct term in~\eqref{eq:grad_J} is still present (since the
diffusion-readout also computes $J=\mathrm{diag}(\bm{w})\mathbf{B}\bm{u}$),
but $\bm{u}(t)$ after $K_{\rm D}$ steps is a heavily smoothed version of $\bm{\rho}$
with attenuated syndrome structure, reducing the magnitude of
$(\mathbf{B}\bm{u})_e$.
This vanishing-gradient effect is real but, contrary to our initial
expectation, \emph{not fatal} under a well-tuned training protocol.  With a
linear learning-rate warmup and a sufficient epoch budget, the diffusion
readout escapes the flat initial region and trains to
$\mathrm{BA}=0.875\pm0.023$ (Table~\ref{tab:readout}), below the two best
readouts but far above random and above the single-sink $J$-readout.  The
catastrophic ``frozen at $\ln 2$'' behaviour seen in preliminary runs was an
artefact of training without warmup and with too few epochs, not an intrinsic
obstruction.  The direct term in~\eqref{eq:grad_J} evidently supplies enough
signal, once the optimiser is past the warmup, to overcome the attenuated
implicit path.

\subsection{Limitations and Future Work}
\label{sec:future}

The primary limitation is that the decoder does not outperform MWPM under
circuit-level depolarising noise with known $p$.  This is expected: Dennis
et al.~\cite{Dennis2002} established that MWPM on the Nishimori line is
near-optimal for this noise model, and a data-driven decoder cannot
systematically improve on maximum-likelihood inference.  A second limitation is that the comparison is meaningful only where the logical
signal is well resolved: as $p$ increases past the distance-$5$ threshold all
decoders, MWPM included, degrade towards the random baseline, and we make no
claim of an ordering among readouts in that regime.

We identify three directions to extend the regime of usefulness:

\textbf{(i) Larger code distance.}
The distance-$5$ code protects the logical signal only at low $p$.  A larger
distance ($d=7,9$) extends the regime in which the signal is well resolved, and
is the natural test of whether the two-sink advantage persists to higher $p$ on
codes large enough to express it.  Because the two-sink construction is defined homologically
(one boundary-sink pair per generator of the relative homology), it
transfers to larger $d$
without structural change.

\textbf{(ii) Explicit harmonic readout and $k>1$ codes.}
The two-sink signal of Eq.~\eqref{eq:twosink_signal} evaluates the pairing
implicitly, through the currents incident on the sinks.  Solving
Eq.~\eqref{eq:chidef} for $\bm{\chi}$ directly and forming
$\sum_{v:\sigma_v=1}(1-2\chi_v)$ evaluates it explicitly; by
Prop.~\ref{prop:pairing} the two must agree, which makes this a sharp check on
the implementation rather than a variant of it.  It also raises a question we
leave open: since the pairing depends on the syndrome only through which
detectors are excited, and on the metric only through $\bm{\chi}$, a decoder
built on Eq.~\eqref{eq:chidef} alone would dispense with the Poisson solve for
$\bm{\phi}$ at inference.  The same route generalises to codes with $k>1$
logical qubits, where one reads $k$ independent pairings, one per boundary
pair.

\textbf{(iii) Bias-aware training.}
Under biased noise the correct metric is anisotropic.  Initialising the weights
from the biased DEM costs ($\log w_e^{(0)}=-c_e^{\rm biased}$) rather than
fine-tuning from an isotropic checkpoint is expected to let the encoder learn
the anisotropic metric directly, narrowing the gap to MWPM in that regime.

\section{Conclusion}
\label{sec:conclusion}

We have introduced a physics-informed GNN decoder for the surface code that
embeds the discrete Gauss law as a hard inductive bias, and we have identified
what carries the logical-error signal.  It is not a component of the edge
current: the current the solver returns is a gradient flow in its entirety, and
its harmonic component vanishes identically.  The logical content is a pairing
of the syndrome with the harmonic representative of the generator of
$H^1(G,\partial G;\Z_2)$---a representative that exists only once two boundary
sinks are declared, and that is fixed, among all members of its class, by the
learned metric.  We prove that the net current between two boundary sinks
placed on the pair of code boundaries linked by the logical operator evaluates
this pairing exactly and without free parameters
(Prop.~\ref{prop:pairing}), and that a single boundary node forfeits it by
merging the two boundaries, leaving no separating harmonic coordinate to pair
against.

Empirically, on the distance-$5$ rotated code under circuit-level noise, the
two-sink readout matches the best full-field readout---the node-potential
$\phi$-readout---in balanced accuracy (paired
$\Delta\mathrm{BA}=-0.012\pm0.024$ at $p=0.005$), and is statistically
indistinguishable from the single-sink current readout as well
($\Delta\mathrm{BA}=+0.005\pm0.012$).  A single topological
scalar thus attains the same accuracy as a full-field pooling readout, which is
precisely what a one-dimensional logical signal ($k=1$) predicts: projecting
onto the harmonic representative discards nothing.  The advantage of the harmonic
readout is not accuracy but interpretability, the absence of learned readout
parameters, and a principled route to
$k>1$ logical qubits.  The picture is preserved at larger code distance
($d=7$): the two-sink readout ($0.857\pm0.011$) significantly exceeds the
single-sink current pool (paired $\Delta\mathrm{BA}=+0.046\pm0.025$, $t=4.1$)
and ties the best full-field readout, so the closed-form scalar helps more, not
less, as the field it must pool grows---while never claiming an edge over the
$\phi$-readout, from which it is statistically indistinguishable.  The
comparison is made where the logical signal is well resolved; as $p$ grows the
signal weakens and all decoders degrade, so we claim no ordering in that regime.
The decoder does not surpass MWPM, which is near-optimal on the Nishimori
line.

The value of the construction is therefore not raw decoding performance but the
principled foundation it provides: by building the decoder on the discrete de
Rham complex of the syndrome graph, we obtain an exact, interpretable, and
differentiable physics layer whose structure is fixed by the code topology
($k$ harmonic representatives, one per logical qubit) and whose only learned
degrees of freedom are the noise-dependent metric $\bm{w}$.  An explicit harmonic
readout for codes with several logical qubits, and bias-aware training for the
regime where a learned decoder can genuinely surpass matching, are the natural
next steps.

More broadly, learned decoders with provable structural guarantees are an
emerging direction: beyond the surface code, related closed-form and
provably-scaling approaches have been developed for the decoding of quantum
states~\cite{Zhong2026}.  Whether a relative-cohomology readout of the kind
introduced here has an analogue in those settings---where the logical
information is likewise carried by a topological, rather than a learned,
structure---is an open question we leave to future work.

\begin{acknowledgments}
This work was supported by the Computational Science Center for Research Communities (CoSeC) NPRAISE project. P. E. T. acknowledges support from the CoSeC Collaborative Computational Project Quantum Computing (CCP-QC).
\end{acknowledgments}

\appendix

\section{Electromagnetic Reading of the Decomposition}
\label{app:em}

This appendix expands the electromagnetic analogy for the Hodge--Helmholtz
decomposition of Eq.~\eqref{eq:hhk}, referenced in Sec.~\ref{sec:onedim}.  It is
intuition, not a load-bearing part of the argument: the exactness of the readout
rests only on the graph being a $1$-dimensional complex (no co-exact sector),
established in the main text.

For a $1$-form in three dimensions this is the Helmholtz decomposition of a
vector field familiar from electromagnetism \cite{Frankel_2011}: a gradient
$\nabla f$ (exact, the irrotational component), a curl
$\nabla\times\mathbf{A}$ (co-exact, the solenoidal component, with
$\mathbf{A}$ the vector potential), and a harmonic remainder fixed by the
topology of the domain.  The analogy is one of mathematical structure and not
an identification: the electrostatic and magnetic fields are distinct physical
fields that happen to be respectively irrotational and solenoidal, whereas the
three terms above are components of a single decomposed field. \\
The middle, the solenoidal term is the one that requires a
$2$-form: writing $\nabla\times\mathbf{A}=\delta\bm{\beta}$ in the calculus of
forms, the potential $\bm{\beta}$ is a $2$-form (the Hodge dual of the vector
potential $\mathbf{A}$), and $\delta$ maps $2$-forms to $1$-forms.

Carrying the analogy one step further, one might reach for the Aharonov--Bohm
effect, which probes exactly this kind of object: a quantity invisible to any
local measurement, fixed instead by how the domain is connected.  The analogy
is suggestive but imprecise, and pinning down why sharpens what our readout
actually is.  The Aharonov--Bohm phase is the holonomy $\oint_\gamma\mathbf{A}$
of a connection around a \emph{closed} loop encircling a flux.  Our syndrome
graph is a $1$-dimensional complex with no co-exact sector, so the relevant
$1$-form $d_0\bm{\chi}$ carries no such holonomy: on the graph as an absolute
complex it is a gradient, and its integral between two vertices depends only on
the endpoints---there is no loop and no circulation.  What plays the role of the
loop is instead the \emph{relative} cycle---an error chain with both endpoints
on $\partial G$---which becomes closed only once the two code boundaries are
identified as a single element of $\partial G$.  The two equipotential sinks are
exactly that identification, and $s$ is the resulting relative pairing
$\langle d_0\bm{\chi},\bm{\sigma}\rangle$, evaluated at the boundary rather than
around a contour.  The electromagnetic picture that matches the construction
term for term is therefore not the Aharonov--Bohm loop but a more mundane one: a
resistor network with distributed current sources and two grounded terminals
held at a common potential, in which $s$ is the net current in the return path
between the two grounds.  The Aharonov--Bohm comparison is worth making only to
set it aside: it names the intuition---a globally defined, locally invisible
quantity---while the honest realisation on a graph is the grounded network.

The upshot for the main text is unchanged: on a graph the co-exact (solenoidal)
sector is absent, the current $\bm{J}$ splits into exact and harmonic parts
only, and the logical signal is the relative pairing of
Prop.~\ref{prop:pairing}, realised physically as the sink-current difference of
a grounded resistor network rather than as any contour holonomy.

\section{Jacobi Convergence Analysis}
\label{app:jacobi}

The Jacobi iteration on $\mathbf{L}\bm{\phi}=\bm{\rho}$ converges if and
only if the spectral radius of the iteration matrix
$\mathbf{I}-\mathbf{D}^{-1}\mathbf{L}$ is less than one, where
$\mathbf{D}=\mathrm{diag}(\mathbf{L})$.  For a weighted Laplacian with
$w_e>0$, the matrix $\mathbf{L}$ is symmetric positive semi-definite, and
$\mathbf{D}^{-1}\mathbf{L}$ has eigenvalues in $[0,2)$ for any graph
(since $\mathbf{L}$ is diagonally dominant).  Thus Jacobi always converges
for the Poisson equation with $w_e>0$, which is guaranteed by the $w_{\rm floor}$
regularisation.  With both boundary sinks held at zero, $\mathbf{L}$ restricted
to the free nodes is symmetric positive definite, so the Jacobi iteration
converges to the unique Dirichlet solution.

At the operating budget $K_{\rm J}=25$ the interior relative residual
$\|(\mathbf{B}^\top\bm{J}-\bm{\rho})_{v\notin\partial G}\|/\|\bm{\rho}\|$ is
$0.150\pm0.028$ across five seeds at $d=5$, $p=0.005$, falling geometrically to
$0.021\pm0.012$ by $K_{\rm J}=100$ (implied Jacobi contraction
$\rho\approx0.97$, time constant $\approx40$ iterations).  The decay confirms
that this is a genuine convergence residual: the solve at $K_{\rm J}=25$ is
stopped within the transient, not at the fixed point.

This is deliberate and does not affect the readout, for a reason worth stating
carefully.  The two-sink signal is the exact pairing of
Prop.~\ref{prop:pairing} \emph{at the fixed point}, but the decoder never
requires the fixed point.  Both training and inference read $s$ at the same
$K_{\rm J}=25$, so the quantity the classifier learns to calibrate is
$s^{(25)}(\bm{w})$: a deterministic, differentiable, and reproducible function
of the weights, whose value tracks the logical class whether or not the
underlying Poisson problem has fully converged.  Consistency is enforced by the
shared iteration budget, not by convergence to $s^{(\infty)}$; the residual is
truncation error on a quantity that is identically defined in training and in
test, and the same budget trains successfully at both $d=5$ and $d=7$.  Where
the exact pairing is wanted---for the explicit-harmonic validation of
Sec.~\ref{sec:future}---$\mathbf{L}$ restricted to the free nodes is symmetric
positive definite under the two-sink Dirichlet condition, so conjugate gradients
reaches $s^{(\infty)}$ in far fewer iterations than Jacobi; damped Jacobi
($\omega=2/3$) is the minimal change.

\section{Diffusion Stability Condition}
\label{app:diffusion}

Forward Euler integration of $\dot{\bm{u}}=-\mathbf{L}\bm{u}$ is stable if
and only if $\Delta t < 2/\lambda_{\max}(\mathbf{L})$.  For the syndrome
graph with $w_e\sim 1$ (after $w_{\rm floor}$ decay), a rough bound gives
$\lambda_{\max}\leq 2\max_v \sum_{e\ni v} w_e \leq 2\,d_{\max}\,w_{\max}$,
where $d_{\max}$ is the maximum degree.  Our constraint $\Delta t_{\max}=0.5$
is conservative with respect to this bound and ensures stability throughout
training.

\section{Training Loss Convergence}
\label{app:loss}

Figure~\ref{fig:loss} shows the per-epoch validation AUC for the four readout
architectures on the $d=5$ dataset, under the common training protocol of
Sec.~\ref{sec:training}.  All four readouts converge to a high AUC; the
diffusion readout, which suffers genuine gradient attenuation through the
Euler-discretised operator (Sec.~\ref{sec:readout_interp}), converges more
slowly and less smoothly but does reach a competitive value, confirming that the
vanishing gradient slows learning rather than preventing it.  This corrects the
picture from preliminary runs without warmup, in which the diffusion loss stalled
near the random level.

\begin{figure}[h]
\centering
\includegraphics[width=\textwidth]{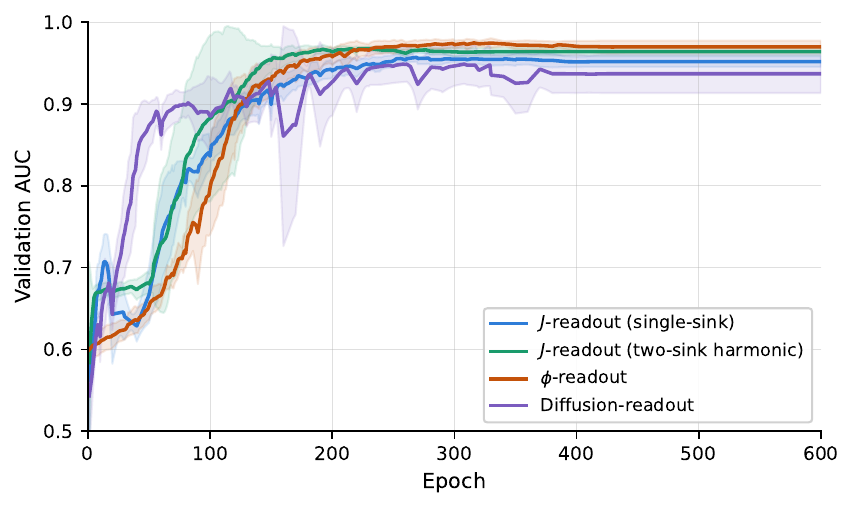}
\caption{%
  Validation AUC vs.\ epoch for the four readouts at $d=5$, $p=0.005$, under the
  common protocol of Sec.~\ref{sec:training} (mean over seeds, shaded band one
  standard deviation).  All four readouts train successfully; the diffusion
  readout converges more slowly but does not stall.  The dotted line marks the
  random baseline (AUC $=0.5$).
}
\label{fig:loss}
\end{figure}

\bibliography{refs}

\end{document}